\newtheorem{theorem}{Theorem}[section]
\newtheorem{proposition}[theorem]{Proposition}
\newtheorem{lemma}[theorem]{Lemma}
\newtheorem{remark}[theorem]{Remark}
\newtheorem{example}[theorem]{Example}
\newtheorem{examples}[theorem]{Examples}
\newtheorem{foo}[theorem]{Remarks}
\newenvironment{Examples}{\begin{examples}\rm}{\end{examples}}
 \newcommand{\be}{\begin{equation}}
 \newcommand{\ee}{\end{equation}}
 \newcommand{\bea}{\begin{eqnarray}}
 \newcommand{\eea}{\end{eqnarray}}
 \newcommand{\beas}{\begin{eqnarray*}}
\newcommand{\eeas}{\end{eqnarray*}}
\newcommand{\brak}[1]{\ensuremath{\left( #1 \right)}}
\newcommand{\crl}[1]{\ensuremath{ \left\{ #1 \right\} }}
\newcommand{\edg}[1]{\ensuremath{ \left[ #1 \right] }}
\newcommand{\ang}[1]{\ensuremath{ \left \langle #1 \right \rangle }}
\newcommand{\p}{\mathbb{P}}
\newcommand{\q}{\mathbb{Q}}
\newcommand{\EP}{{\mathbb E}^{\p}}
\newcommand{\EQ}{{\mathbb E}^{\q}}
\newcommand{\AV}{{\rm AVaR}_{\lambda}}
\newcommand{\Ent}{{\rm Ent}_{\lambda}}
\newcommand{\BIGOP}[1]{\mathop{\mathchoice%
{\raise-0.22em\hbox{\huge $#1$}}%
{\raise-0.05em\hbox{\Large $#1$}}{\hbox{\large $#1$}}{#1}}}
\newcommand{\BIGboxplus}{\mathop{\mathchoice%
{\raise-0.35em\hbox{\huge $\boxplus$}}%
{\raise-0.15em\hbox{\Large $\boxplus$}}{\hbox{\large $\boxplus$}}{\boxplus}}}
\begin{document}

\title{Duality formulas for robust pricing and\\ hedging in discrete time\footnote{We thank Daniel Bartl,
Peter Carr, Samuel Drapeau, Marek Musiela, Jan Ob{\l}{\'o}j, Mete Soner and Nizar Touzi for 
fruitful discussions and helpful comments. The first author was partially supported by 
NSF Grant DMS-1515753 and the third one by Vienna Science and Technology Fund Grant MA 14-008.}}

\author{Patrick Cheridito\footnote{Department of Mathematics, ETH Zurich, 8092 Zurich, Switzerland.}
\and Michael Kupper\footnote{Department of Mathematics and Statistics, University of Konstanz, 
78464 Konstanz, Germany.}
\and Ludovic Tangpi\footnote{Faculty of Mathematics, University of Vienna, 1090 Vienna, Austria.}}

\date{September 2017}

\maketitle

\begin{abstract}

\noindent
In this paper we derive robust super- and subhedging dualities for contingent claims that can depend
on several underlying assets. In addition to strict super- and subhedging, we also consider 
relaxed versions which, instead of eliminating the shortfall risk completely, aim to reduce it to an 
acceptable level. This yields robust price bounds with tighter spreads. As examples
we study strict super- and subhedging with general convex transaction costs and trading constraints
as well as risk-based hedging with respect to robust versions of the average value at risk and 
entropic risk measure. Our approach is based on representation results for increasing convex functionals and 
allows for general financial market structures. As a side result it yields a robust version of 
the fundamental theorem of asset pricing.\\[2mm]
{\bf 2010 Mathematics Subject Classification:} 91G20, 46E05, 60G42, 60G48\\[2mm]
\textbf{Keywords}: Robust price bounds, superhedging, subhedging, robust fundamental 
theorem of asset pricing, transaction costs, trading constraints, risk measures.
\end{abstract}

\setcounter{equation}{0}
\section{Introduction}
\label{sec:intro}

Super- and subhedging dualities lie at the heart of no-arbitrage arguments in 
quantitative finance. By relating prices to hedging, they provide bounds on 
arbitrage-free prices. But they also serve as a stepping stone to the application of
duality methods to portfolio optimization problems. In traditional financial modeling,
uncertainty is described by a single probability measure $\p$, and the 
super- and subhedging prices of a contingent claim with discounted payoff $X$ are given by 
\be \label{suph}
\phi(X) = \inf \crl{m \in \mathbb{R} : \mbox{there exists a } Y \in G \mbox{ such that }
m-X+ Y \ge 0 \mbox{ $\p$-a.s.}}
\ee
and
\be \label{subh}
-\phi(-X) = \sup \crl{m \in \mathbb{R} : \mbox{there exists a } Y \in G \mbox{ such that }
X-m + Y \ge 0 \mbox{ $\p$-a.s.}},
\ee
where $G$ is the set of all realizable discounted trading gains. Classical results assume that $X$ depends 
on a set of underlying assets which can be traded dynamically without transaction costs or constraints. 
Then $G$ is a linear space, and under a suitable no-arbitrage condition, one obtains dualities of the form
\be \label{classical}
\phi(X) = \sup_{\q \in {\cal M}^e(\p)} \EQ X \quad \mbox{and} \quad 
-\phi(-X) = \inf_{\q \in {\cal M}^e(\p)} \EQ X,
\ee
where ${\cal M}^e(\p)$ is the set of all (local) martingale measures equivalent to $\p$; see e.g.
\cite{FS} or \cite{DSch} for an overview.

In this paper we do not assume that the probabilities of all future events are known. 
So instead of starting with a predefined probability measure, we 
specify a collection of possible trajectories for a set of basic assets.
This includes a wide range of setups, from a single binomial tree model to the model-free case,
in which at any time, the prices of all assets can lie anywhere in $\mathbb{R}_+$ (or $\mathbb{R}$). 
We replace the $\p$-almost sure inequalities in \eqref{suph} and \eqref{subh} by a general set $A$ of 
acceptable discounted positions and consider super- and subhedging functionals of the form 
\be \label{rob}
\phi(X) = \inf \crl{m \in \mathbb{R} : m - X \in A- G} \quad \mbox{and} \quad -\phi(-X)
= \sup \crl{m \in \mathbb{R} : X-m \in A- G}.
\ee
If $A$ is the cone of non-negative outcomes, this describes strict super- and subhedging, which 
requires that the shortfall risk be eliminated completely. Alternatively, one can 
allow for a certain amount of risk by enlarging the set $A$. This reduces the spread between 
super- and subhedging prices. Moreover, the set of discounted trading gains $G$ does not have to be a linear space and 
can describe general market structures with transaction costs and trading constraints. 

Our main result, Theorem \ref{thm:main}, provides dual representations for the quantities in \eqref{rob} in terms 
of expected values of $X$. As a byproduct, it yields a robust fundamental 
theorem of asset pricing (FTAP), which relates two different notions of no-arbitrage to the existence of 
generalized martingale measures. In the case where the underlying assets are bounded, it holds for 
general sets $A$ and $G$ such that $G-A$ is convex. Otherwise, it needs that the set $G-A$ is large 
enough. This can be guaranteed by assuming that the financial market is sufficiently rich or, as 
shown in Proposition \ref{prop:A}, that the acceptability condition is not too strict. 
In Sections \ref{sec:ss} and \ref{sec:rm} we study different specifications of $A$ and $G$, for which the
dual representations can be computed explicitly. Section \ref{sec:ss} is devoted to the case where 
$A$ consists of all non-negative outcomes, corresponding to strict super- and subhedging.
We consider general semi-static trading strategies consisting of dynamic investments in the underlying assets
and static derivative positions. Proposition \ref{prop:trans} covers general convex transaction 
costs and constraints on the derivative holdings. Proposition \ref{prop:short} deals with dynamic 
shortselling constraints. In Section \ref{sec:rm} we relax the hedging requirement and control shortfall risk
with a family of risk measures defined via different probability measures.
This allows to introduce risk-tolerance in a setup of model-uncertainty.
Our price bounds then become robust good deal bounds.
Proposition \ref{prop:AVaR} gives an explicit duality formula in the case where shortfall risk is assessed with 
a robust average value at risk. Proposition \ref{prop:Ent} provides the same for a robust entropic risk measure. 

Our approach is based on representation results for increasing convex functionals that we develop in the 
appendix. It permits to combine robust methods with transaction costs, trading constraints, partial hedging and good deal
bounds. Robust hedging methods go back to \cite{Hob98} and were further investigated in e.g.
\cite{Cox-Wang,dav-obl-rav,obloj}. Various versions of robust FTAPs and superhedging dualities have been derived in 
\cite{Acciaio2013,ban-dol-goe,BCKT,bei-hl-pen,Bou-Nutz,Burz,BMF,bur-fri-mag,fabio15,DS_transport,
DS,FH,gal-hl-tou,HO,Riedel}. For FTAPs and superhedging dualities under transaction costs we refer to 
\cite{CJP, DS13, grs, JK95, Schachermayer04} and the references therein.
The literature on partial hedging started with the quantile-hedging approach of \cite{Foel-Leu} and 
subsequently developed more general risk-based methods; see e.g. \cite{Carr01,Foel-Leu2000,Rud07}
and the closely related literature on good deals, such as e.g. \cite{Nadal,JK,KS,Staum04}.

The rest of the paper is organized as follows: Section \ref{sec:main} introduces the notation and states 
the paper's main results. In Section \ref{sec:ss} we study strict super- and subhedging and give 
robust FTAPs with corresponding superhedging dualities under convex transaction costs and 
trading constraints. As special cases we obtain versions of Kantorovich's transport duality 
\cite{Kanto} that include martingale or supermartingale constraints. 
In Section \ref{sec:rm} we use robust risk measures to weaken the acceptability condition.
This yields robust good deal bounds lying closer together than the strict 
super- and subhedging prices of Section \ref{sec:ss}. All proofs are given in the appendix.

\setcounter{equation}{0}
\section{Main results}
\label{sec:main}

We consider a model with $J+1$ financial assets $S^0, \dots, S^J$ and finitely many trading periods $t = 0,1, \dots, T$. 
As sample space we take a non-empty subset $\Omega$ of 
$(\mathbb{R}_{++} \times \mathbb{R}^J)^T$, where $\mathbb{R}_{++}$ denotes the 
positive half-line $(0,+\infty)$. The initial prices of the 
assets are assumed to be known and given by $S^0_0 = 1$ together with numbers $S^j_0 \in \mathbb{R}$,
$j = 1, \dots, J$. Their future prices are uncertain and modeled as $S^j_t(\omega) = \omega^j_t$, 
$t =1, \dots, T$, $j=0, \dots, J$, $\omega \in \Omega$. To state and prove the results in this paper, it will be 
convenient to quote prices in units of $S^0$. Discounted like this, the asset prices become 
$\tilde{S}^0 \equiv 1$ and $\tilde{S}^j_t = S^j_t/S^0_t$, $j =1 , \dots, J$. However, derivatives on 
$S^0, \dots, S^J$ are usually specified in terms of the nominal prices $S^j_t$ and not the discounted prices $\tilde{S}^j_t$.
So the discounted payoff of a general contingent claim on $S^0, \dots, S^J$ with maturity $T$ is
given by a mapping $X : \Omega \to \mathbb{R}$.

We endow $\Omega$ with the Euclidean metric and denote the space of all Borel measurable 
functions $X : \Omega \to \mathbb{R}$ by $B$. All discounted trading gains that can 
be realized by investing in the financial market from time $0$ until $T$ are given by a subset 
$G \subseteq B$ containing $0$. $G$ might contain discounted payoffs of derivatives maturing 
before time $T$. Then the proceeds are invested in $S^0$ and held until time $T$.
All acceptable discounted time-$T$ positions are modeled with a subset $A \subseteq B$ containing 
the positive cone $B^+ := \crl{X \in B : X \ge 0}$ such that $A + B^+ \subseteq A$ and $A-G$ is convex.
The corresponding superhedging functional is
$$
\phi(X) := \inf \crl{m \in \mathbb{R} : m-X \in A-G}, \quad \mbox{where } \inf \emptyset := + \infty.
$$
It determines for every liability with discounted time-$T$ payoff $X \in B$, the minimal initial capital needed
such that $m -X$ can be transformed into an acceptable position by investing in the financial market. 
The case $A = B^+$ corresponds to strict superhedging, which requires that 
a contingent claim be superreplicated in every possible scenario $\omega \in \Omega$. A larger acceptance set $A$
relaxes the superhedging requirement and lowers the hedging costs. The subhedging 
functional induced by $G$ and $A$ is given by
$$
-\phi(-X) = \sup \crl{m \in \mathbb{R} : X - m \in A -G}, \quad \mbox{where } \sup \emptyset := - \infty.
$$

We assume that there exists a continuous function $Z : \Omega \to [1, +\infty)$ with compact sublevel 
sets\footnote{It follows from this assumption that $\Omega$ is $\sigma$-compact. In particular, it is a 
Borel measurable subset of $(\mathbb{R}_{++} \times \mathbb{R}^J)^T$. On the other hand, if $\Omega$ is a 
non-empty subset of $(\mathbb{R}_{++} \times \mathbb{R}^J)^T$ that is closed in $\mathbb{R}^{(J+1)T}$, 
any continuous function $Z \colon \Omega \to [1, + \infty)$ with bounded sublevel sets  has compact sublevel sets.} 
$\crl{\omega \in \Omega: Z(\omega) \le z}$ for all $z \in \mathbb{R}_+$. 
We will consider hedging dualities for discounted payoff functions whose growth is controlled by $Z$.
Let $B_Z \subseteq B$ be the subspace consisting of functions $X \in B$ such that 
$X/Z$ is bounded, $U_Z$ the set of all upper semicontinuous $X \in B_Z$ 
and $C_Z$ the space of all continuous $X \in B_Z$. By ${\cal P}_Z$ we denote the set of all Borel 
probability measures on $\Omega$ satisfying the integrability condition $\EP Z < + \infty$.
Define the convex conjugate $\phi^* : {\cal P}_Z \to \mathbb{R} \cup \crl{\pm \infty}$ by
$$
\phi^*(\p) := \sup_{X \in C_Z} (\EP X - \phi(X)).
$$
Then the following holds:

\begin{theorem} \label{thm:main}
Assume 
\be \label{condcall}
\mbox{for every $n \in \mathbb{N}$, there exists a $z \in \mathbb{R}_+$ such that } \;
n(Z - z)^+ - \frac{1}{n} \in G-A.
\ee
Then the following three conditions are equivalent:
\begin{itemize}
\item[{\rm (i)}] 
there exist no $X \in G-A$ and $\varepsilon \in \mathbb{R}_{++}$ such that $X \ge \varepsilon$
\item[{\rm (ii)}]
there exists a probability measure $\p \in {\cal P}_Z$ such that $\EP X \le 0$ for all $X \in C_Z \cap (G-A)$
\item[{\rm (iii)}] 
$\phi$ is real-valued on $B_Z$ with $\phi(0) = 0$ and
$\phi(X) = \max_{\p \in {\cal P}_Z} (\EP X - \phi^*(\p))$ for all $X \in C_Z$.
\end{itemize}
If in addition to \eqref{condcall}, one has
\be \label{condusc}
\phi(X) = \inf_{Y \in C_Z, Y \ge X} \phi(Y) \quad \mbox{for all } X \in U_Z,
\ee
then {\rm (i)--(iii)} are also equivalent to each of the following three:
\begin{itemize}
\item[{\rm (iv)}] 
there exists no $X \in G-A$ such that $X(\omega) > 0$ for all $\omega \in \Omega$
\item[{\rm (v)}]
there exists a probability measure $\p \in {\cal P}_Z$ such that $\EP X \le 0$ for all $X \in U_Z \cap (G-A)$
\item[{\rm (vi)}] 
$\phi$ is real-valued on $B_Z$ with $\phi(0) = 0$ and
$\phi(X) = \max_{\p \in {\cal P}_Z} (\EP X - \phi^*(\p))$ for all $X \in U_Z$.
\end{itemize}
\end{theorem}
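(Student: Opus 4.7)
The plan is to treat $\phi$ as an increasing convex functional on $B_Z$ and apply a Fenchel--Moreau-type representation developed in the appendix. The hypotheses on $A$ and $G$ (in particular $A+B^+\subseteq A$, convexity of $A-G$, and $0\in A-G$) immediately yield that $\phi\colon B\to[-\infty,+\infty]$ is monotone nondecreasing, convex, and cash-additive (that is, $\phi(X+c)=\phi(X)+c$), with $\phi(0)\le 0$.

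For the chain (i)$\Leftrightarrow$(ii)$\Leftrightarrow$(iii), the main work is in (i)$\Rightarrow$(iii). First, (i) forces $\phi(0)=0$, since otherwise a positive constant would lie in $G-A$. Next, (\ref{condcall}) bounds $\phi$ from above on $B_Z$: for $X\in B_Z$ with $|X|\le cZ$ and $n\ge c$, a direct case split according to whether $Z\ge z_n$ or $Z<z_n$ gives $X-n(Z-z_n)^+\le nz_n$; combined with $1/n-n(Z-z_n)^+\in A-G$ and $A+B^+\subseteq A$, this yields $\phi(X)\le nz_n+1/n$. Convexity together with $\phi(0)=0$ then gives $\phi(-X)\ge-\phi(X)$, so $\phi$ is finite on all of $B_Z$. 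At this point the representation theorem from the appendix, applied to $\phi|_{C_Z}$, delivers the $\max$-formula of~(iii). The converse (iii)$\Rightarrow$(ii) picks $\p$ attaining the maximum at $X=0$, so $\phi^*(\p)=0$, and for $X\in C_Z\cap(G-A)$, the inequality $\phi(X)\le 0$ yields $\EP X\le 0$. Finally (ii)$\Rightarrow$(i): an $X\in G-A$ with $X\ge\varepsilon>0$ would place the constant $\varepsilon$ in $G-A\cap C_Z$ (using $G-A-B^+\subseteq G-A$), and (ii) would then force $\varepsilon\le 0$, a contradiction.

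For the chain (iv)$\Leftrightarrow$(v)$\Leftrightarrow$(vi) under (\ref{condusc}), the implications (iv)$\Rightarrow$(i), (v)$\Rightarrow$(ii), and (vi)$\Rightarrow$(iii) are immediate from $C_Z\subseteq U_Z$ and from the fact that strict-positive constants are continuous. What remains is to lift (i)--(iii) to (iv)--(vi). For (iii)$\Rightarrow$(vi) and $X\in U_Z$, I would use (\ref{condusc}) to write $\phi(X)=\inf_{Y\in C_Z,Y\ge X}\max_{\p}(\EP Y-\phi^*(\p))$ and then interchange the $\inf$ and $\max$ via a minimax argument. The key ingredient is tightness of the sublevel sets $\{\p\in{\cal P}_Z:\phi^*(\p)\le c\}$, which follows from $\phi(Z)<\infty$: indeed $\EP Z\le\phi(Z)+\phi^*(\p)$ is uniformly bounded and $Z$ has compact sublevel sets. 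Then (vi)$\Rightarrow$(v) is parallel to the first chain. For (v)$\Rightarrow$(iv), I would argue by contradiction: given a hypothetical $X\in G-A$ with $X(\omega)>0$ everywhere, $\sigma$-compactness of $\Omega$ together with the closure of $G-A$ under monotone truncation allow one to extract a continuous $0\le Y\le X$ with $Y\in G-A$ and $\EP Y>0$, contradicting (ii).

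The main obstacle is (i)$\Rightarrow$(iii): the Fenchel--Moreau representation with the supremum attained. The appendix's representation theorem presumably requires a Daniell--Stone/Dini-type $\sigma$-continuity of $\phi|_{C_Z}$ for countable additivity of the dual measures, and the $Z$-tightness discussed above for the supremum to be a maximum rather than only a supremum. Condition~(\ref{condcall}) plays a double role here: it bounds $\phi$ on $B_Z$ and it enforces this tightness. The minimax interchange in (iii)$\Rightarrow$(vi) and the approximation step in (v)$\Rightarrow$(iv) are secondary technical hurdles, the latter complicated by the fact that a generic element of $G-A$ need not be upper semicontinuous.
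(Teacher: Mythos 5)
Your first chain (i)$\Leftrightarrow$(ii)$\Leftrightarrow$(iii) follows the paper's route: the bound $\phi(X)\le nz_n+1/n$ from \eqref{condcall}, cash-additivity, and the appendix representation theorem for increasing convex functionals on $C_Z$ is exactly how the paper argues, and your treatment of (iii)$\Rightarrow$(ii)$\Rightarrow$(i) is the paper's. The problems are in the second chain.

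The main gap is your argument for (v)$\Rightarrow$(iv). Given $X\in G-A$ with $X(\omega)>0$ everywhere, you propose to extract a \emph{continuous} $Y$ with $0\le Y\le X$ and $\EP Y>0$ and contradict (ii). This extraction is impossible in general: a Borel function that is strictly positive everywhere need not dominate any nonzero continuous function (e.g.\ if $X$ takes values $2^{-n}$ along a dense sequence $(q_n)$, any continuous $Y\le X$ satisfies $Y\le 0$ by density). This is precisely why condition (v) is stated for $U_Z$ rather than $C_Z$. The correct move, which is the paper's, is to fix the measure $\p$ from (v), use that $\Omega=\bigcup_n\{X\ge 1/n\}$ to find $\varepsilon>0$ with $\p[X\ge\varepsilon]>0$, invoke inner regularity of the Borel measure $\p$ to get a \emph{closed} set $F\subseteq\{X\ge\varepsilon\}$ with $\p[F]>0$, and observe that $\varepsilon 1_F$ is upper semicontinuous, lies in $U_Z\cap(G-A)$ because $G-A-B^+\subseteq G-A$, and has $\EP[\varepsilon 1_F]>0$, contradicting (v) directly.

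A secondary issue is your compactness claim for the sublevel sets $\{\p:\phi^*(\p)\le c\}$. The bound $\sup_{\phi^*(\p)\le c}\EP Z<\infty$ gives classical tightness, but the duality here is against $C_Z$, whose elements grow like $Z$; for $\sigma({\cal P}_Z,C_Z)$-compactness (needed both for attainment of the max and for any minimax interchange in (iii)$\Rightarrow$(vi)) one needs the uniform $Z$-integrability $\sup_{\phi^*(\p)\le c}\EP(Z-z)^+\to 0$ as $z\to+\infty$. This is where \eqref{condcall} enters again: testing $\phi^*$ against $m(Z-z)^+$ and using $\phi(m(Z-z)^+)\le\phi(0)+1$ for $z$ large yields exactly this decay (the paper's Lemma on compactness of sublevel sets). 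With that repaired, your minimax route for (iii)$\Rightarrow$(vi) is a workable alternative to the paper's sequential argument, which instead approximates $X\in U_Z$ by $X_n\downarrow X$ in $C_Z$ and extracts a $\sigma({\cal P}_Z,C_Z)$-convergent subsequence of near-optimizers; both hinge on the same compactness.
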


\eqref{condcall} and \eqref{condusc} are both conditions on the set $G-A$ (the latter since $\phi$ is defined by $G-A$).
\eqref{condcall} is trivially satisfied if $\Omega$ is compact since in this case, $(Z-z)^+=0$ for $z \in \mathbb{R}_+$ large enough.
On the other hand, if $\Omega$ is not compact, \eqref{condcall} holds if, for instance, for every $n \in \mathbb{N}$, 
there exists a $z \in \mathbb{R}_+$ such that there is an investment opportunity yielding 
a discounted outcome of at least $n(Z-z)^+$ at 
an initial cost of no more than $1/n$, or alternatively, if $1/n - n(Z-z)^+$ is considered to be an 
acceptable discounted position. Proposition \ref{prop:A} below provides a class of acceptance 
sets such that \eqref{condcall} holds without additional assumptions on $G$, and Proposition 
\ref{prop:dualcond} gives an equivalent condition for \eqref{condusc}.
In all our examples in Sections \ref{sec:ss} and \ref{sec:rm} below, both conditions, \eqref{condcall} and 
\eqref{condusc}, are satisfied.

(i) and (iv) are no-arbitrage conditions, or in the case where the acceptance set $A$ is larger than 
$B^+$, so called no-good deal conditions. (i) means that there exists no trading 
strategy starting with zero initial capital generating an outcome that exceeds an
acceptable position by a positive fraction of $S^0_T$, or equivalently, no trading 
strategy turning a negative initial wealth into an acceptable position. The same condition was used by 
\cite{Garman85} and \cite{Lewis10} in the classical framework. (iv) is slightly stronger. In the case where 
$A$ equals $B^+$, it corresponds to absence of model-independent arbitrage 
as introduced by \cite{Dav-Hob07} and used e.g., in \cite{Acciaio2013}. We point out that for $A = B^+$, 
(i) and (iv) are both weaker than the traditional no-arbitrage condition, which requires that there exist
no trading strategies generating a non-negative profit that is positive on a non-negligible part 
of the sample space (see e.g. \cite{HK, HP, FS, DS}).

(ii) and (v) generalize the concept of a martingale measure. For instance, if $A = B^+$ and the underlying assets 
are liquidly traded, they consist of proper martingale measures. But in the presence of proportional 
transaction costs, they become $\varepsilon$-approximate martingale measures, and 
under short-selling constraints, supermartingale measures (see the examples in Section \ref{sec:ss}).

(iii) and (vi) yield dual representations for the superhedging functional $\phi$. The max means that 
the right sides of (iii) and (vi) are suprema which are attained.
(iii) and (vi) directly translate into dual representations for the subhedging functional $-\phi(-X)$. If condition (iii) holds, one has
$$
-\phi(-X) = \min_{\p \in {\cal P}_Z} (\EP X + \phi^*(\p)) \quad \mbox{for all } X \in C_Z,
$$
and the representation extends to all $X \in U_Z$ if (vi) is satisfied. Moreover, note that
as soon as $\phi$ is real-valued on $B_Z$ with $\phi(0) = 0$, the same is true 
for the subhedging functional, and one obtains by convexity,
$\phi(X) + \phi(-X) \ge 2 \phi(0) = 0$, yielding the ordering
$$\phi(X) \ge - \phi(-X) \quad \mbox{for all } X \in B_Z.
$$

A wide class of acceptance sets can be written as
\be \label{acc}
A = \crl{X \in B_Z : \EP X + \alpha(\p) \ge 0 \mbox{ for all } \p \in {\cal P}_Z} + B^+
\ee
for a suitable mapping $\alpha : {\cal P}_Z \to \mathbb{R}_+ \cup \crl{+ \infty}$. In the 
extreme case $\alpha \equiv 0$, $A$ is the positive cone $B^+$. On the other hand, it can be shown 
that if $\alpha$ grows fast enough, assumption \eqref{condcall} of Theorem \ref{thm:main} is 
automatically satisfied:

\begin{proposition} \label{prop:A}
Condition \eqref{condcall} holds if $A$ is given by \eqref{acc} for a mapping 
$\alpha : {\cal P}_Z \to \mathbb{R}_+ \cup \crl{+\infty}$ satisfying
\begin{enumerate}
\item[{\rm (A1)}] $\inf_{\p \in {\cal P}_Z} \alpha(\p) =0$ and
\item[{\rm (A2)}] $\alpha(\p) \ge \EP \beta(Z)$ for all $\p \in {\cal P}_Z$, where
$\beta : [1,+\infty) \to \mathbb{R}$ is an increasing\footnote{We call a function $f$ from a subset 
$I \subseteq \mathbb{R}$ to $\mathbb{R}$ increasing if $f(x) \ge f(y)$ for $x \ge y$.} function with the property
$\lim_{x \to + \infty} \beta(x)/x = + \infty$.
\end{enumerate}
\end{proposition}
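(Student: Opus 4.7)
The plan is first to reduce the set-membership condition in \eqref{condcall} to an inequality on $\alpha$. Fix $n \in \mathbb{N}$. Since $0 \in G$, to obtain $n(Z-z)^+ - 1/n \in G - A$ it suffices to find $z \ge 0$ with $X_n := 1/n - n(Z-z)^+ \in A$. Note that $X_n \in B_Z$ because $(Z-z)^+ \le Z$ for $z \ge 0$. Decomposing $X_n = X_0 + Y$ according to \eqref{acc}, with $X_0 \in B_Z$ subject to $\EP X_0 + \alpha(\p) \ge 0$ for all $\p \in \mathcal{P}_Z$ and $Y \in B^+$, one has $\EP X_0 + \alpha(\p) = \alpha(\p) + \EP X_n - \EP Y$; since $\EP Y \ge 0$, the choice $Y \equiv 0$ gives the weakest constraint on $\alpha$. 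It therefore suffices to exhibit $z = z_n \ge 0$ such that
\[
\alpha(\p) + \tfrac{1}{n} \,\ge\, n \, \EP (Z - z_n)^+ \qquad \text{for every } \p \in \mathcal{P}_Z.
\]

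For the main estimate, set $M := n^2 \max(\abs{\beta(1)}, 1) + n$ and use $\lim_{x \to \infty}\beta(x)/x = +\infty$ to choose $z_n \ge 0$ so that $\beta(x) \ge M x$ for all $x \ge z_n$. Combining $\beta(Z) \ge \beta(1) \ge -\abs{\beta(1)}$ on $\{Z \le z_n\}$ with $\beta(Z) \ge M Z \ge M (Z - z_n)^+$ on $\{Z > z_n\}$ yields, after taking expectations, the key bound
\[
\EP \beta(Z) \;\ge\; M \, \EP (Z - z_n)^+ \,-\, \abs{\beta(1)}.
\]

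To conclude, combine this with $\alpha(\p) \ge 0$, which is automatic from \eqref{acc}, and split cases. If $\EP(Z - z_n)^+ \le 1/n^2$, then $n\EP(Z - z_n)^+ - 1/n \le 0 \le \alpha(\p)$, so the required inequality holds. Otherwise $\EP(Z - z_n)^+ > 1/n^2$, and the choice $M - n \ge n^2 \abs{\beta(1)}$ yields
\[
(M - n)\, \EP(Z - z_n)^+ \;>\; \frac{M - n}{n^2} \;\ge\; \abs{\beta(1)},
\]
so by (A2), $\alpha(\p) \ge \EP \beta(Z) \ge M \EP(Z - z_n)^+ - \abs{\beta(1)} \ge n\EP(Z - z_n)^+$, which is even stronger than required. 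The main subtlety is that if $\beta(1) < 0$ the naive pointwise inequality $\beta(x) + 1/n \ge n(x - z)^+$ fails at $x = 1$; this is precisely why the case split on $\EP(Z - z_n)^+$, supported by the nonnegativity of $\alpha$, is required. Note that assumption (A1) does not enter this argument.
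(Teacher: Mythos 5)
Your proof is correct, but it follows a genuinely different route from the paper's. You reduce \eqref{condcall} to the sufficient pointwise family of inequalities $\alpha(\p)+\tfrac1n\ge n\,\EP(Z-z_n)^+$ (valid since $0\in G$ and one may take $Y\equiv 0$ in the decomposition \eqref{acc}), and then verify it by an explicit, quantitative choice of $z_n$: the bound $\EP\beta(Z)\ge M\,\EP(Z-z_n)^+-\abs{\beta(1)}$ together with the case split on whether $\EP(Z-z_n)^+$ exceeds $1/n^2$ (where the case split correctly handles the possibility $\beta(1)<0$, using only $\alpha\ge 0$). The paper instead writes $A=\crl{X : \rho(X)\le 0}+B^+$ for $\rho(X)=\sup_{\p}(\EP(-X)-\alpha(\p))$, convexifies $\beta$, applies Jensen's inequality and Prokhorov's theorem to deduce that the sublevel sets of $\alpha$ are relatively $\sigma({\cal P}_Z,C_Z)$-compact, and then invokes the continuity-from-above statement of Lemma \ref{lemma:conv}(ii) to get $\rho(1/n-n(Z-z)^+)\downarrow -1/n$ as $z\to+\infty$. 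Your argument is more elementary and self-contained — it avoids the weak-compactness machinery entirely and produces an explicit $z_n$ — and your observation that (A1) is not needed (only the nonnegativity of $\alpha$) is accurate; the paper's proof nominally uses (A1) to identify $\rho(0)=0$, though $\rho(0)\le 0$ would suffice there too. What the paper's softer approach buys is reuse: the compactness of the sublevel sets of $\alpha$ established en route is exactly the structural fact exploited again in the proofs of Lemma \ref{lemma:oce} and Propositions \ref{prop:AVaR} and \ref{prop:Ent}, so the argument exhibits the conceptual mechanism rather than just the estimate. One cosmetic remark: you may as well take $z_n\ge 1$ so that the condition $\beta(x)\ge Mx$ is only imposed on the domain $[1,+\infty)$ of $\beta$.
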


The following result gives a dual condition for assumption \eqref{condusc} which will be useful in 
Sections \ref{sec:ss} and \ref{sec:rm}.

\begin{proposition} \label{prop:dualcond}
Assume \eqref{condcall} holds. Then
$$
\phi^*(\p) = \sup_{X \in C_Z \cap (G-A)} \EP X \le  \sup_{X \in U_Z} (\EP X - \phi(X))
= \sup_{X \in U_Z \cap (G-A)} \EP X \quad \mbox{for all } \p \in {\cal P}_Z,
$$
and the inequality is an equality if and only if $\phi$ satisfies \eqref{condusc}.
\end{proposition}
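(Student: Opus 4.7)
I regard the proposition as combining three structural identities and an ``if and only if''. The identities $\phi^*(\p) = \sup_{X \in C_Z \cap (G-A)} \EP X$ and $\sup_{X \in U_Z}(\EP X - \phi(X)) = \sup_{X \in U_Z \cap (G-A)} \EP X$ come from the same argument; the inequality between them is immediate from $C_Z \subseteq U_Z$. The substance lies in the iff, whose reverse direction is the main obstacle.

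\textbf{The two identities.} Writing $S$ for either $C_Z$ or $U_Z$: any $W \in S \cap (G-A)$ gives $-W \in A - G$, so $\phi(W) \le 0$ by the definition of $\phi$, and hence $\EP W \le \EP W - \phi(W) \le \sup_{X \in S}(\EP X - \phi(X))$. Conversely, for $X \in S$ with $\phi(X)$ finite and $\varepsilon > 0$, the shift $X' := X - \phi(X) - \varepsilon$ lies in $S \cap (G-A)$, since $\phi(X) + \varepsilon - X \in A - G$ by the definition of $\phi$ as an infimum and by upward closedness of $A - G$ (which follows from $A + B^+ \subseteq A$). Then $\EP X' = \EP X - \phi(X) - \varepsilon \le \sup_{W \in S \cap (G-A)} \EP W$, and letting $\varepsilon \downarrow 0$ yields the matching bound. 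The degenerate cases $\phi(X) \in \{\pm \infty\}$ are handled directly.

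\textbf{Easy direction of the iff.} Assuming \eqref{condusc}, for $X \in U_Z$ and $\varepsilon > 0$ I pick $Y \in C_Z$ with $Y \ge X$ and $\phi(Y) < \phi(X) + \varepsilon$; combining $\EP X \le \EP Y$ with $-\phi(X) < -\phi(Y) + \varepsilon$ gives $\EP X - \phi(X) \le \EP Y - \phi(Y) + \varepsilon \le \phi^*(\p) + \varepsilon$. Taking the supremum over $X$ and sending $\varepsilon \downarrow 0$ yields $\sup_{X \in U_Z}(\EP X - \phi(X)) \le \phi^*(\p)$, so together with the reverse inequality equality holds.

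\textbf{Hard direction.} Assume the equality for all $\p \in {\cal P}_Z$. Suppose, toward contradiction, that there exist $X^* \in U_Z$ and $c \in \R$ with $\phi(X^*) < c < \inf_{Y \in C_Z,\, Y \ge X^*} \phi(Y)$. Then $V := X^* - c$ belongs to $U_Z \cap (G-A)$, while every $Y \in C_Z$ with $Y \ge X^*$ has $\phi(Y) > c$ and hence $Y - c \notin G - A$; in particular the convex sets $K := \{Y' \in C_Z : Y' \ge V\}$ and $C_Z \cap (G-A)$ are disjoint. I would separate them by Hahn--Banach in $C_Z$; upward closedness of $K$ forces the separating functional to be positive, and the growth control from \eqref{condcall} combined with the appendix's representation theorem for positive linear functionals on $C_Z$ realises it as integration against some $\p \in {\cal P}_Z$. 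This would produce $\EP V > \sup_{W \in C_Z \cap (G-A)} \EP W = \phi^*(\p)$, whereas $V \in U_Z \cap (G-A)$ forces $\EP V \le \sup_{X \in U_Z}(\EP X - \phi(X))$, contradicting the standing equality. The delicate point, and the one in which the appendix machinery is essential, is ensuring that the separating functional really corresponds to a probability in ${\cal P}_Z$ and not merely to an arbitrary positive measure of the wrong total mass.
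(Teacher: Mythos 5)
Your two identities and the ``\eqref{condusc} $\Rightarrow$ equality'' direction are correct and essentially the paper's own argument: the paper likewise observes that $X - \phi(X) - \varepsilon \in C_Z \cap (G-A)$ for $X\in C_Z$ (resp.\ $U_Z$), after first disposing of the degenerate case $\phi(0)=-\infty$, in which \eqref{condcall} forces $G-A=B_Z$, and using \eqref{condcall} to conclude that $\phi$ is real-valued otherwise. For the hard direction you genuinely depart from the paper: the paper separates nothing, but instead notes that \eqref{condcall} gives \eqref{everycall} and then invokes the equivalence (iii) $\Leftrightarrow$ (iv) of Theorem \ref{thm:repusc}, whose implication (iv) $\Rightarrow$ (iii) rests on the $\sigma(ca^+_Z,C_Z)$-compactness of the sublevel sets of $\phi^*_{C_Z}$ and the decreasing-approximation argument of Lemmas \ref{lemma:comp} and \ref{lemma:conv}. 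Your contrapositive Hahn--Banach argument is a legitimate alternative route, but as written it has two concrete gaps.

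First, the strict inequality $\EP V > \sup_{W \in C_Z \cap (G-A)} \EP W$ does not follow from the separation you set up. Since $V = X^*-c$ lies in $U_Z$ but not in $K$, you can only evaluate it against the measure by a limit along $Y'_n \downarrow V$, $Y'_n \in K$, which yields $\ang{V,\mu} \ge \gamma \ge \sup_{W \in C_Z\cap(G-A)}\ang{W,\mu}$ --- non-strict on both sides, hence no contradiction with $\EP V \le \phi^*(\p)$. You must spend the slack $\phi(X^*) < c$: for instance, separate $K=\crl{Y'\in C_Z: Y'\ge X^*-c}$ from $C_Z\cap(G-A)$ as you do, but then test the measure against $V'=X^*-c'$ for some $c'\in(\phi(X^*),c)$; this $V'$ still lies in $U_Z\cap(G-A)$ and satisfies $\ang{V',\mu}\ge\gamma+(c-c')\mu(\Omega)>\gamma$, which does give the contradiction.

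Second, the step you flag as delicate is indeed where the work lies, and the citation you lean on does not exist: the appendix has no representation theorem for positive linear functionals on $C_Z$. Theorem \ref{thm:rep} is stated for increasing convex functionals satisfying \eqref{condepsilon}, and a positive norm-continuous linear functional on $C_Z\cong C_b$ is a priori only finitely additive. To get $\sigma$-additivity you must verify $\lim_{z\to+\infty}\ell((Z-z)^+)=0$, and this requires not just positivity of $\ell$ but the separation bound itself: \eqref{condcall} places $n(Z-z)^+-1/n$ in $C_Z\cap(G-A)$ for suitable $z$, whence $n\,\ell((Z-z)^+)\le\gamma+\ell(1)/n$, and only then does the Dini argument from the proof of Theorem \ref{thm:rep} yield the Daniell property and a measure $\mu\in ca^+_Z$ with $\mu(\Omega)=\ell(1)>0$, which you normalize to a $\p\in{\cal P}_Z$. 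With these two repairs your route closes; the paper's route avoids both issues by doing all of the functional analysis once, inside Theorem \ref{thm:repusc}.
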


\setcounter{equation}{0}
\section{Strict superhedging}
\label{sec:ss}

In this section we concentrate on the case where the acceptance set $A$ is given by the positive cone $B^+$. 
$\Omega$ is assumed to be a non-empty closed subset of $\prod_{t=1}^T ([a_t,b_t] \times \mathbb{R}^{J}_+)$
for numbers $0 < a_t \le b_t$,
and the price processes $S^0, \dots, S^J$ are given by $S^0_0 = 1$, $S^j_0 \in \mathbb{R}_+$, $j =1, \dots, J$
and $S^j_t(\omega) = \omega^j_t$, $t \ge 1$, $\omega \in \Omega$. 
They generate the filtration ${\cal F}_t = \sigma(S^j_s : j =0, \dots, J, \,s \le t)$, $t = 0, \dots, T$.
As growth function we choose $Z = 1 + \sum_{j,t \ge 1} (\tilde{S}^j_t)^p$ for a constant $p \ge 1$.
It clearly is continuous, and the sublevel sets $\crl{\omega \in \Omega : Z(\omega) \le z}$ are compact
for all $z \in \mathbb{R}_+$. Moreover, $\tilde{S}^j_t$ belongs to $C_Z$ for all $j$ and $t$.
For any set $G \subseteq B$ of discounted trading gains containing 
$0$ such that $G - B^+$ is convex, condition \eqref{condcall} is equivalent to 
\be \label{sscondition}
\begin{aligned}
&\mbox{for every $n\in\mathbb{N}$, $j=1,\dots,J$, and $t=1,\dots, T$, there exist}\\[-1mm]
&\mbox{$X \in G$ and $K \in \mathbb{R}_+$ such that $X \ge n((\tilde{S}^j_t)^p-K)^+ - 1/n$.}
\end{aligned}
\ee
For instance, in the case $p=1$, condition \eqref{sscondition} holds if the market offers call options 
on all assets $S^1, \dots, S^J$ with every maturity $t = 1, \dots, T$ at arbitrarily small prices.

Let us define
$$
\phi^*_G(\p) := \sup_{X \in G} \EP X, \quad \p \in {\cal P}_Z,
$$ with the understanding that 
$$
\EP X := \left\{
\begin{array}{ll}  \EP X^+ - \EP X^- & \mbox{ if } \EP X^- < + \infty\\
- \infty & \mbox{ otherwise,}
\end{array}
\right.
$$
and introduce the corresponding set of generalized martingale measures
$$
{\cal M} := \left\{\p\in {\cal P}_Z: \phi^*_G(\p)=0\right\}=\left\{\p\in {\cal P}_Z:\EP X \le 0
\mbox{ for all } X \in G \right\}.
$$

\subsection{Semi-static hedging with convex transaction costs and constraints}
\label{subsec:semis}

Let us first assume that the assets $S^0, \dots, S^J$ can be traded dynamically, and in addition,
it is possible to form a static portfolio of derivatives depending on $S^0, \dots, S^J$. We describe the dynamic 
part of the trading strategy by a $J$-dimensional predictable process $(\vartheta_t)_{t=1}^T$
modeling the holdings of the assets $S^1, \dots, S^J$ over time and suppose that
buying or selling shares of $S^j$ at time $t$ incurs transaction costs of the form 
$g^j_t(\omega, \Delta\vartheta^j_{t+1}(\omega) S^j_t(\omega))$ for a continuous 
function $g^j_t : \Omega \times \mathbb{R} \to \mathbb{R}$
such that $g^j_t(\omega, x)$ is ${\cal F}_t$-measurable in $\omega$ and convex in 
$x$ with $g^j_t(\omega,0) =0$. In the case where $\Omega$ is an unbounded subset of $\mathbb{R}^{(J+1)T}$,
we also assume $\sup_{x \in E} |g^j_t(\omega,x)|/Z(\omega)$ to be bounded in $\omega$
for every bounded subset $E \subseteq \mathbb{R}$. The static portfolio can be formed
by investing in a given set of derivatives with discounted payoffs $H_i \in C_Z$, $i \in I$.
We make no assumptions on the index set $I$. In particular, $(H_i)_{i \in I}$ can be
an infinite collection. However, it is only possible to 
invest in finitely many of them. More precisely, we denote by $\mathbb{R}^I_0$ the set of vectors in 
$\mathbb{R}^I$ with at most finitely many components different from $0$ and suppose that the static part 
of the strategy $\theta$ is constrained to lie in a given convex subset $\Theta \subseteq \mathbb{R}^I_0$ containing $0$. 
Discounted transaction costs in the derivatives market are given by a convex mapping $h : \Theta \to C_Z$ satisfying $h(0) = 0$.
This means that they can depend on the underlying uncertainty.
As usual, we suppress the $\omega$-dependence of $g^j_t$ in the notation. Then 
the resulting set of discounted trading gains $G$ consists of outcomes of the form 
$$
\sum_{t=1}^T\sum_{j=1}^J \brak{\vartheta_t^j \Delta \tilde{S}_t^j - 
\frac{g^j_{t-1}(\Delta\vartheta^j_{t} S^j_{t-1})}{S^0_{t-1}}} + \sum_{i \in I} \theta_i H_i -  h(\theta),
$$
where $\Delta \tilde{S}^j_t = \tilde{S}^j_t - \tilde{S}^j_{t-1}$ and $\Delta \vartheta^j_t = \vartheta^j_t - \vartheta^j_{t-1}$ 
with $\vartheta^j_0=0$.

For this specification of our general model, the following can be deduced from Theorem \ref{thm:main}
and Proposition \ref{prop:dualcond}:

\begin{proposition} \label{prop:trans}
If $G$ satisfies condition \eqref{sscondition}, the following four conditions are equivalent:
\begin{itemize}
\item[{\rm (i)}] 
there exist no $X\in G$ and $\varepsilon \in \mathbb{R}_{++}$ such that $X\ge\varepsilon$
\item[{\rm (ii)}] 
there exists no $X \in G$ such that $X(\omega) > 0$ for all $\omega \in \Omega$
\item[{\rm (iii)}] 
${\cal M}\neq\emptyset$
\item[{\rm (iv)}] $\phi$ is real-valued on $B_Z$ with $\phi(0) = 0$ and
$\phi(X) = \max_{\p \in {\cal P}_Z} (\EP X - \phi^*_G(\p))$ for all $X \in U_Z$.
\end{itemize} 
Moreover,
\begin{equation} \label{penaltyG}
\phi^*_G(\p) =
\left\{\begin{array}{ll}
\sum_{t=0}^{T-1}\sum_{j=1}^J \EP \edg{\frac{1}{S^0_t} 
g^{j*}_t \left(\frac{\EP[\tilde{S}^j_T\mid{\cal F}_{t}]- \tilde{S}^j_{t}}{\tilde{S}^j_{t}} \right)1_{\crl{\tilde{S}^j_{t}>0}}}
+ h^*(\p) & \mbox{ if } \p \in {\cal R}\\
+ \infty & \mbox{ if } \p \notin {\cal R},
\end{array} \right.
\end{equation}
for $$
g^{j*}_t(y) := \sup_{x \in \mathbb{R}}(xy - g^j_t(x)), \quad
h^*(\p) := \sup_{\theta \in\Theta} \EP \brak{\sum_{i \in I}\theta_i  H_i-h(\theta)},
$$
and
$$
{\cal R} := \crl{\p \in {\cal P}_Z :
\p[S^j_t =0 \mbox{ and } S^j_T > 0]=0 \mbox{ for all } j =1, \dots, J \mbox{ and } t \le T-1}.
$$
\end{proposition}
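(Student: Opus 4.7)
The plan is to apply Theorem~\ref{thm:main} and Proposition~\ref{prop:dualcond} with $A=B^+$, after verifying the two structural assumptions and identifying the abstract conjugate $\phi^*(\p)$ with $\phi^*_G(\p)$. Under that identification, the stated equivalences drop out: because $Y-W\ge\varepsilon$ (respectively, $>0$ everywhere) forces $Y\ge\varepsilon$ (respectively, $>0$ everywhere), conditions (i) and (iv) of Theorem~\ref{thm:main} coincide with (i) and (ii) of the proposition; condition (v) reads ``$\phi^*_G(\p)\le 0$ for some $\p$,'' i.e.\ $\cal M\neq\emptyset$; and (vi) is the dual representation in (iv) of the proposition.

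To verify \eqref{condcall}, I first observe that $G-B^+$ is convex: for strategies $(\vartheta^{(i)},\theta^{(i)})$ and $\lambda\in[0,1]$, convexity of $g^j_t(\omega,\cdot)$ and of $h$ together with $g^j_t(\omega,0)=0$ and $h(0)=0$ ensures that the gain produced by $(\lambda\vartheta^{(1)}+(1-\lambda)\vartheta^{(2)},\lambda\theta^{(1)}+(1-\lambda)\theta^{(2)})$ dominates the convex combination of the two individual gains. Then I apply \eqref{sscondition} with $n':=nJT$ to obtain, for each pair $(j,t)$, an $X^{(j,t)}\in G$ with $X^{(j,t)}\ge n'((\tilde S^j_t)^p-K)^+-1/n'$ for a common $K\in\mathbb{R}_+$, and the convex combination $\frac{1}{JT}\sum_{j,t}X^{(j,t)}\in G-B^+$ dominates $n(Z-z)^+-1/n$ for $z:=1+JTK$ via the elementary bound $(\sum a_{j,t}-M)^+\le\sum_{j,t}(a_{j,t}-M/(JT))^+$.

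To compute $\phi^*_G(\p)$, summation by parts rewrites each dynamic gain as $\sum_{t=0}^{T-1}[\psi^j_t(\tilde S^j_T-\tilde S^j_t)-g^j_t(\psi^j_t S^j_t)/S^0_t]$ with $\psi^j_t:=\Delta\vartheta^j_{t+1}$ being $\mathcal{F}_t$-measurable. Conditioning on $\mathcal{F}_t$, the expected $(j,t)$-summand becomes $\EP\edg{\frac{1}{S^0_t}(xy^j_t-g^j_t(x))}$ on $\{\tilde S^j_t>0\}$, where $x:=\psi^j_t S^j_t$ and $y^j_t:=(\EP[\tilde S^j_T\mid\mathcal{F}_t]-\tilde S^j_t)/\tilde S^j_t$; by a standard measurable-selection argument one may pass the pointwise supremum inside the expectation, producing $\EP\edg{\frac{1}{S^0_t}g^{j*}_t(y^j_t)\mathbf{1}_{\{\tilde S^j_t>0\}}}$. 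On $\{\tilde S^j_t=0\}$ the transaction cost vanishes while the gain contribution $\psi^j_t\EP[\tilde S^j_T\mid\mathcal{F}_t]$ is unbounded above unless $\p[S^j_t=0,\,S^j_T>0]=0$, forcing $\phi^*_G(\p)=+\infty$ off of $\cal R$. The static optimization supplies $h^*(\p)$, delivering \eqref{penaltyG}.

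Finally, to identify $\phi^*(\p)=\phi^*_G(\p)$ and to verify \eqref{condusc}, the inequality $\phi^*(\p)\le\phi^*_G(\p)$ is immediate from $\phi(X)=\inf_{Y\in G}\sup_\omega(X-Y)(\omega)\ge\EP X-\EP Y$ for each $Y\in G$. For the reverse I take a near-optimal Borel predictable $\psi^j_t$ (and $\theta$) in the previous computation and approximate it, via Lusin's theorem on the compact sublevel sets of $Z$ combined with a dominated-convergence argument controlled by the growth bound $Z$, by strategies whose dynamic part is continuous in the past prices, producing $Y_n\in G\cap C_Z$ with $\EP Y_n\to\phi^*_G(\p)$; since $\phi(Y_n)\le 0$ (take $m=0$ and the hedging strategy whose gain is $Y_n$ itself), one obtains $\phi^*(\p)\ge\EP Y_n$. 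The same approximation applied to an arbitrary $X=Y-W\in U_Z\cap(G-B^+)$ yields $\sup_{C_Z\cap(G-B^+)}\EP X=\sup_{U_Z\cap(G-B^+)}\EP X$, and Proposition~\ref{prop:dualcond} then delivers \eqref{condusc}. The main obstacle is precisely this approximation step: one must convert arbitrary predictable strategies into continuous ones without losing the expected gain, and it is the uniform integrability furnished by $Z$ that makes this possible.
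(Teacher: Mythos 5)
Your proposal is correct and follows essentially the same route as the paper's proof: reduction to Theorem \ref{thm:main} and Proposition \ref{prop:dualcond} with $A=B^+$, identification of $\phi^*$ with $\phi^*_G$ via continuous approximation of near-optimal predictable strategies, and computation of $\phi^*_G$ by summation by parts and pointwise conjugation, with the degenerate set $\crl{S^j_t=0}$ forcing the restriction to ${\cal R}$. The only differences are cosmetic: you obtain the equivalences by matching the proposition's conditions onto (i), (iv), (v), (vi) of Theorem \ref{thm:main} rather than proving (iv) $\Rightarrow$ (iii) $\Rightarrow$ (ii) $\Rightarrow$ (i) directly, and you spell out the verification that \eqref{sscondition} implies \eqref{condcall} (convexity of $G-B^+$ plus the averaging bound), which the paper asserts in the main text without proof.
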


Proposition \ref{prop:trans} extends Theorem 2.2 of \cite{ban-dol-goe}, which provides a 
duality result for general convex transaction costs in a model with a compact sample space 
in which there are no derivatives available for hedging.

\subsubsection{Proportional transaction costs}

As a special case, let us consider proportional transaction costs together with the constraint 
that some of the derivatives in the market might only be available to buy or sell. More precisely,
dynamic transaction costs are given by functions of the form $g^j_t(x) = \varepsilon^j_t |x|$ for  
${\cal F}_t$-measurable random coefficients $\varepsilon^j_t \in C^+_Z$. The static part of the 
hedging portfolio is given by a vector $\theta \in \mathbb{R}^I_0$ with associated cost
$h(\theta) = \sum_{i \in I} h^+_i \theta^+_i - h^-_i \theta^-_i$ for bid and ask prices $h^-_i \in \mathbb{R} \cup \crl{-\infty}$ and 
$h^+_i \in \mathbb{R} \cup \crl{+\infty}$, where $h^+_i = + \infty$ means that $H_i$ cannot be bought
and $h^-_i = - \infty$ that it cannot be sold. The corresponding discounted trading outcomes are of the form
$$
\sum_{t=1}^T\sum_{j=1}^J \brak{\vartheta_t^j \Delta \tilde{S}_t^j - \varepsilon^j_{t-1} |\Delta\vartheta^j_{t} \tilde{S}^j_{t-1}|}
+ \sum_{i \in I} \brak{\theta_i H_i -  \theta^+_i h^+_i + \theta^-_i h^-_i},
$$
and
$$
g^{j*}_t(y) = \left\{
\begin{array}{ll}
0 & \mbox{ if } |y| \le \varepsilon^j_t\\
+ \infty & \mbox{ otherwise,}
\end{array} \right. \qquad
h^*(\p) = \left\{
\begin{array}{ll}
0 & \mbox{ if } h^-_i \le \EP H_i \le h^+_i \mbox{ for all } i \in I\\
+ \infty & \mbox{ otherwise.}
\end{array} \right.
$$
By Proposition \ref{prop:trans}, one has 
$$
\phi^*_G(\p) = \left\{
\begin{array}{ll}
0 & \mbox{ if } \p \in {\cal M}\\
+ \infty & \mbox{ otherwise,}
\end{array} \right.
$$
where ${\cal M}$ is the set of all probability measures $\p \in {\cal P}_Z$ satisfying the conditions
\begin{itemize}
\item[a)]
$(1-\varepsilon^j_t) \tilde{S}^j_t \le \EP[\tilde{S}^j_T \mid {\cal F}_t] \le (1+\varepsilon^j_t) \tilde{S}^j_t$
for all $j = 1, \dots, J$ and $t = 0, \dots, T-1$
\item[b)]
$h^-_i \le \EP H_i \le h^+_i \mbox{ for all } i \in I$.
\end{itemize}
So if \eqref{sscondition} and (i) of Proposition \ref{prop:trans} hold, one obtains the duality
\be \label{propduality}
\phi(X) = \max_{\p \in {\cal M}} \EP X \quad \mbox{for all } X \in U_Z.
\ee
If $\varepsilon^j_t \equiv 0$, dynamic trading is frictionless, and a) reduces to the standard 
martingale condition. In this case, \eqref{propduality} is the superhedging duality 
given in Theorem 1.4 of \cite{Acciaio2013} for $S^0 \equiv 1$. On the other hand, in the case 
where the coefficients $\varepsilon^j_t$ are constant 
and $(H_i)_{i \in I}$ consists of European options depending continuously on $\tilde{S}_T$, \eqref{propduality} becomes the 
superhedging duality shown in Theorem 2.6 of \cite{DS} for $S^0 \equiv 1$.

\subsubsection{Superlinear transaction costs}

For $\Theta = \mathbb{R}^I_0$ and transaction costs corresponding to 
$$
g^j_t(x)= \frac{\varepsilon^j_t}{p_j} |x|^{p_j}, \quad h(\theta) = \sum_{i \in I} h_i \theta_i + \frac{\delta_i}{q_i} |\theta_i|^{q_i}
$$
for positive ${\cal F}_t$-measurable $\varepsilon^j_t \in C_Z$ and constants 
$\delta_i >0$, $p_j,q_j > 1$, $h_i \in \mathbb{R}$, one obtains from Proposition \ref{prop:trans},
\[
\phi^*_G(\p)=\sum_{t=0}^{T-1} \sum_{j=1}^J \EP \edg{ \frac{(\varepsilon^j_t)^{1-p'_j}}{S^0_t p'_j}
\left| \frac{\EP[\tilde{S}^j_T \mid{\cal F}_t]-\tilde{S}^j_t}{\tilde{S}^j_t} \right|^{p'_j}}
+ \sum_{i \in I} \frac{\delta_i^{1-q_i'}}{q_i'} \left|\EP H_i-h_i\right|^{q_i'},
\]
where $p_j' := p_j/(p_j-1)$, $q_i' := q_i/(q_i-1)$, $0/0 := 0$ and $x/0 := + \infty$ for $x > 0$. 
Moreover, if \eqref{sscondition} and condition (i) of Proposition \ref{prop:trans} hold, one has
$$
\phi(X) = \max_{\p \in {\cal P}_Z} (\EP X - \phi^*_G(\p))\quad \mbox{for all }X \in U_Z.
$$

\subsubsection{European call options and constraints on the marginal distributions}

Let $g^j_t : \Omega \times \mathbb{R} \to \mathbb{R}$ be as in the beginning of 
Subsection \ref{subsec:semis} above and assume the family $(H_i)_{i \in I}$ consists of all 
discounted European call option payoffs
$$
(\tilde{S}^j_t-K)^+, \quad j = 1, \dots J, \; t = 1, \dots, T, \; K \in \mathbb{R}_+.
$$
Moreover suppose that arbitrary quantities of options with discounted payoffs $(\tilde{S}^j_t-K)^+$ can be bought or sold at prices 
$p^{j,+}_{t,K}$ and $p^{j,-}_{t,K}$, respectively. If $\lim_{K \to + \infty} p^{j,+}_{t,K} = 0$ for all 
$j$ and $t$, condition \eqref{sscondition} holds. So if in addition, (i) of Proposition \ref{prop:trans} is satisfied, 
one obtains
\be \label{gmt}
\phi(X) = \max_{\p} \brak{\EP X - \sum_{t=0}^{T-1} \sum_{j=1}^J 
\EP \edg{ \frac{1}{S^0_t} g^{j*}_{t} \brak{\frac{\EP[\tilde{S}^j_T\mid{\cal F}_{t}]- \tilde{S}^j_{t}}{\tilde{S}^j_{t}}}
1_{\crl{\tilde{S}^j_{t}>0}}}}, \quad \mbox{for all } X \in U_Z,
\ee
where the maximum is over all $\p \in {\cal P}_Z$ such that 
\begin{itemize}
\item[a)] $\p[S^j_{t} = 0 \mbox{ and } S^j_T > 0] = 0$ for all $j = 1, \dots, J$ and $t = 0, \dots, T-1$
\item[b)] $p^{j,-}_{t,K} \le \EP (\tilde{S}^j_t - K)^+ \le p^{j,+}_{t,K}$ for all $j=1, \dots, J,$ $t=1, \dots T$ and $K \in \mathbb{R}_+$.
\end{itemize}
$\EP (\tilde{S}^j_t - K)^+$ can be written as $\int_K^{+\infty} \p[\tilde{S}^j_t > x] dx$.
So condition b) puts constraints on the distributions of $\tilde{S}^j_t$ under $\p$, and 
in the limiting case $p^{j,+}_{t,K} = p^{j,-}_{t,K}$, it fully determines the distributions of  
$\tilde{S}^j_t$ under $\p$. In particular, if $g^j_t \equiv 0$ and $p^{j,+}_{t,K} = p^{j,-}_{t,K} =p^j_{t,K} \in \mathbb{R}_+$ for all 
$j,t,K$, it follows from \eqref{sscondition} and (i) of Proposition \ref{prop:trans} that there exist unique 
marginal distributions $\nu^j_t$ on $\mathbb{R}_+$ specified by $\int_K^{+\infty} \nu^j_t(x,+\infty) dx = p^j_{t,K}$,
$K \in \mathbb{R}_+$, such that
\be \label{mt}
\phi(X) = \max_{\p \in {\cal M}} \EP X \quad \mbox{for all } X \in U_Z,
\ee 
where ${\cal M}$ consists of all $\p \in {\cal P}_Z$ satisfying
\begin{itemize}
\item[a)] $\tilde{S}^1, \dots, \tilde{S}^J$ are martingales under $\p$
\item[b)] the distribution of $\tilde{S}^j_t$ under $\p$ is $\nu^j_t$ for all $j =1, \dots, J$ and $t = 1, \dots, T$.
\end{itemize}
\eqref{mt} is a variant of Kantorovich's transport duality \cite{Kanto} and has recently been studied in 
different setups under the name martingale transport duality; see e.g., \cite{BCKT,bei-hl-pen,gal-hl-tou}.

\subsection{Semi-static hedging with short-selling constraints}

Now assume that dynamic trading does not incur transaction costs, but only non-negative quantities of the 
assets $S^1, \dots, S^J$ can be held. As above, one can invest statically in derivatives with discounted payoffs
$H_i \in C_Z$, $i \in I$, according to a strategy $\theta$ lying in a convex subset 
$\Theta \subseteq \mathbb{R}^I_0$ containing $0$. Let $h : \Theta \to C_Z$ 
be a convex mapping with $h(0) = 0$ and suppose that $G$ consists of discounted outcomes of the form
\[
\sum_{t=1}^T \sum_{j=1}^J \vartheta^j_t \Delta \tilde{S}^j_t + \sum_{i \in I} \theta_i H_i - h(\theta),
\]
where $(\vartheta_t)_{t=1}^T$ is a non-negative $J$-dimensional predictable strategy and $\theta \in \Theta$.
Then the following variant of Proposition \ref{prop:trans} holds:

\begin{proposition} \label{prop:short}
If \eqref{sscondition} is satisfied, the conditions {\rm (i)--(iv)} of Proposition \ref{prop:trans} 
are equivalent, where 
$$
\phi^*_G(\p) =
\left\{\begin{array}{ll}
\sup_{\theta \in\Theta} \EP (\sum_{i \in I} \theta_i H_i-h(\theta))
& \mbox{ if } \tilde{S}^1, \dots, \tilde{S}^J \mbox{ are supermartingales under } \p\\
+\infty & \mbox{ otherwise,}
\end{array} \right.
$$
and ${\cal M} = \crl{\p \in {\cal P}_Z : \phi^*_G(\p) =0}$.
\end{proposition}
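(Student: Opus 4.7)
My plan is to apply Theorem \ref{thm:main} with $A = B^+$ to the set $G$ described in the proposition, and then identify the abstract penalty $\phi^*(\p)$ with the explicit $\phi^*_G(\p)$ via Proposition \ref{prop:dualcond}. The set $G$ is convex because the non-negative predictable strategies form a convex cone on which the map $\vartheta \mapsto \sum_{t,j} \vartheta^j_t \Delta \tilde{S}^j_t$ is linear, and $\theta \mapsto \sum_i \theta_i H_i - h(\theta)$ is concave on $\Theta$ by convexity of $h$; hence $G - B^+$ is convex and Theorem \ref{thm:main} is applicable. In our $A = B^+$ setting, condition \eqref{condcall} reduces to the assumed condition \eqref{sscondition}, as explained above that display. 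To also secure the equivalence with (iv), which involves $U_Z$, I verify \eqref{condusc} through Proposition \ref{prop:dualcond}: given $Y \in U_Z \cap (G - B^+)$, I write $Y \le X$ for some $X \in G$ and approximate the underlying predictable $\vartheta$ by bounded, continuous, non-negative predictable strategies $\vartheta^{(n)} \uparrow \vartheta$ (possible since $\Omega$ is a closed subset of Euclidean space and predictable functions can be approximated by continuous functions of the coordinates up to time $t-1$, composed with $x \mapsto (x \wedge n)^+$); the corresponding $X_n \in G \cap C_Z$ then satisfy $\liminf \EP X_n \ge \EP Y$ by the $Z$-growth bound and monotone/dominated convergence.

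The heart of the argument is the computation $\phi^*_G(\p) = \sup_{X \in G} \EP X$, which I carry out by decoupling the dynamic and static components of $G$. Taking $\vartheta \equiv 0$ immediately gives the lower bound $\phi^*_G(\p) \ge \sup_{\theta \in \Theta} \EP ( \sum_i \theta_i H_i - h(\theta) )$. When every $\tilde{S}^j$ is a $\p$-supermartingale (which is meaningful since $|\tilde{S}^j_t| \le Z$ implies $\tilde{S}^j_t$ is $\p$-integrable), the tower property applied to the truncations $\vartheta^j_t \wedge N$, followed by monotone convergence together with the convention $\EP X = -\infty$ when $X^-$ is non-integrable, shows that $\EP \sum_{t,j} \vartheta^j_t \Delta \tilde{S}^j_t \le 0$ for every non-negative predictable $\vartheta$, so the lower bound is sharp. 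Conversely, if some $\tilde{S}^j$ fails the supermartingale property, there exist $t$ and $A \in {\cal F}_{t-1}$ with $\EP (1_A \Delta \tilde{S}^j_t) > 0$; the strategy $\vartheta^j_t = N 1_A$ (all other components zero) is admissible and yields expected gain $N \EP (1_A \Delta \tilde{S}^j_t) \to +\infty$, proving $\phi^*_G(\p) = +\infty$. Combined with $\phi^*(\p) = \phi^*_G(\p)$ delivered by Proposition \ref{prop:dualcond}, Theorem \ref{thm:main}(i)--(vi) then translates directly into conditions (i)--(iv) of the present proposition, with ${\cal M} = \{\phi^*_G = 0\}$.

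The main obstacle is the interplay between integrability and approximation. Since admissible $\vartheta$ need not be bounded, continuous, or $\p$-integrable, both the upper-semicontinuous approximation required for \eqref{condusc} and the identification of the $\p$-supermartingale property as the sharp boundary between $\phi^*_G(\p) = 0$ and $\phi^*_G(\p) = +\infty$ must be carried out through careful truncation arguments, with consistent use of the convention $\EP X = -\infty$ when $X^-$ is non-integrable, so as to transfer the supermartingale test from the restricted continuous class produced by Proposition \ref{prop:dualcond} to all of $G$.
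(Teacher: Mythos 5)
Your argument follows essentially the same route as the paper's proof: both reduce the statement to Theorem \ref{thm:main} and Proposition \ref{prop:dualcond} exactly as in the proof of Proposition \ref{prop:trans}, approximate predictable strategies by bounded continuous ones to obtain $\phi^*=\phi^*_G$ together with \eqref{condusc}, and then split $\sup_{X\in G}\EP X$ into its static part and a dynamic part that equals $0$ or $+\infty$ according to whether $\tilde{S}^1,\dots,\tilde{S}^J$ are $\p$-supermartingales. The truncation and blow-up arguments you spell out are precisely the details the paper compresses into its reduction to bounded non-negative strategies and the phrase ``it is easy to see''.
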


\subsubsection{Dynamic and static short-selling constraints}

If there are short-selling constraints on the dynamic as well as static part of the trading strategy,
that is, $\Theta = \mathbb{R}^I_0 \cap \mathbb{R}^I_+$, and $h(\theta) = \sum_{i \in I} h_i \theta_i$ for 
prices $h_i \in \mathbb{R}$, it follows by Proposition \ref{prop:short} from \eqref{sscondition} and 
condition (i) of Proposition \ref{prop:trans} that $\phi(X) = \max_{\p \in {\cal M}} \EP X$ for all $X \in U_Z$,
where ${\cal M}$ is the set of all measures $\p \in {\cal P}_Z$ such that 
\begin{itemize}
\item[{\rm a)}] $\tilde{S}^1, \dots, \tilde{S}^J$ are supermartingales under $\p$ 
\item[{\rm b)}] $\EP H_i \le h_i$ for all $i \in I$.
\end{itemize}

\subsubsection{Supermartingale transport duality}

Suppose now that only the dynamic part of the trading strategy is subject to short-selling constraints and 
$(H_i)_{i \in I}$ consists of all discounted call options $(\tilde{S}^j_t-K)^+$, $j=1,\dots, J$, $t = 1, \dots, T$, $K \in \mathbb{R}_+$. 
If arbitrary quantities of $(\tilde{S}^j_t - K)^+$ can be bought and sold at prices $p^{j,+}_{t,K} \le p^{j,-}_{t,K}$, respectively,
condition \eqref{sscondition} is satisfied provided that $\lim_{K \to + \infty} p^{j,+}_{t,K} = 0$ for all $j$ and $t$. 
So, if in addition, (i) of Proposition \ref{prop:trans} holds, one obtains from 
Proposition \ref{prop:short} that
\be \label{smt}
\phi(X) = \max_{\p \in {\cal M}} \EP X \quad \mbox{for all } X \in U_Z
\ee
where ${\cal M}$ consists of the measures $\p \in {\cal P}_Z$ satisfying
\begin{itemize}
\item[{\rm a)}] $\tilde{S}^1, \dots, \tilde{S}^J$ are supermartingales under $\p$ 
\item[{\rm b)}] $p^{j,-}_{t,K} \le \EP(\tilde{S}^j_t-K)^+= \int_K^{+\infty} \p[\tilde{S}^j_t > x] dx \le p^{j,+}_{t,K}$ 
for all $j =1, \dots, J$, $t = 1, \dots, T$ and $K \in \mathbb{R}_+$.
\end{itemize}
In the special case $p^{j,+}_{t,K} = p^{j,-}_{t,K} = p^j_{t,K} \in \mathbb{R}_+$, condition b) is satisfied if and only if 
for all $j$ and $t$, the distribution of $\tilde{S}^j_t$ under $\p$ is equal to a measure $\nu^j_t$ satisfying 
$\int_K^{+\infty} \nu^j_t(x, +\infty) dx = p^j_{t,K}$ for all $K \in \mathbb{R}_+$. In this case,
\eqref{smt} becomes a supermartingale version of Kantorovich's transport duality \cite{Kanto}.

\setcounter{equation}{0}
\section{Superhedging with respect to risk measures}
\label{sec:rm}

In this section we relax the strict superhedging requirement and consider sets of acceptable discounted 
outcomes of the form 
\be \label{accset}
A = \bigcap_{\q \in {\cal Q}} \crl{X \in B_Z : \rho_{\q}(X) \le 0} + B^+,
\ee
where $\mathcal{Q} \subseteq \mathcal{P}_Z$ is a non-empty set of probability measures, and for 
every $\q \in {\cal Q}$, $\rho_{\q} : B_Z \to \mathbb{R}$ is a convex risk measure. 
More specifically, we concentrate on transformed loss risk measures:
$$
\rho_{\q}(X) = \min_{s \in \mathbb{R}} \brak{\EQ l_{\q}(s-X) -s}
$$
for loss functions $l_{\q} : \mathbb{R} \to \mathbb{R}$. Up to a minus sign they coincide with the 
optimized certainty equivalents of Ben-Tal and Teboulle \cite{ben-tal01}. For unbounded random 
variables, they were studied in Section 5 of \cite{CL09}. We make the following assumptions:

\begin{itemize}
\item[(l1)] every $l_{\q}$ is increasing\footnote{i.e., $l_{\q}(x) \ge l_{\q}(y)$ for $x \ge y$} 
and convex with $\lim_{x \to \pm \infty} (l_{\q}(x) - x) = + \infty$
\item[(l2)] $\sup_{\q} \EQ l_{\q}(\varphi(Z)) < + \infty$ for an increasing function
$\varphi : [1,+\infty) \to \mathbb{R}$ satisfying \newline $\lim_{x \to +\infty} \varphi(x)/x = +\infty$
\item[(l3)] $l^*_{\q}(1) = 0$ for all $\q \in {\cal Q}$, where 
$l^*_{\q}(y) = \sup_{x \in \mathbb{R}} (xy - l_{\q}(x))$, $y \in \mathbb{R}.$
\end{itemize}
Then the following holds:

\begin{lemma} \label{lemma:oce}
For every $\q \in {\cal Q}$, $\rho_{\q}$ is a real-valued convex risk measure on $B_Z$ 
with $\rho_{\q}(0) =0$ and dual representation
\be \label{OrDual}
\rho_{\q}(X) = \max_{\p \in {\cal P}_Z, \, \p \ll \q} \brak{\EP[-X] - \EQ \edg{l^*_{\q}\brak{\frac{d\p}{d\q}}}}.
\ee
Moreover, the acceptance set $A$ given in \eqref{accset}
is of the form \eqref{acc} for a mapping $\alpha : {\cal P}_Z \to \mathbb{R}_+ \cup \crl{+\infty}$ 
satisfying {\rm (A1)} and {\rm (A2)}.
\end{lemma}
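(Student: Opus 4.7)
The plan is to handle the lemma in three stages, establishing first that $\rho_\q$ is a well-defined real-valued convex risk measure on $B_Z$, then deriving the dual representation by a direct conjugate computation, and finally identifying an explicit penalty $\alpha$ for which (A1) and (A2) are verified. For the first step, fix $\q\in\mathcal{Q}$ and $X\in B_Z$, so that $|X|\le cZ$ for some $c>0$. For each $s\in\R$, since $\lim_{x\to\infty}\varphi(x)/x=+\infty$, there is $x_0\ge 1$ with $|s|+cx\le\varphi(x)$ for $x\ge x_0$; monotonicity of $l_\q$ then yields $l_\q(s-X)\le l_\q(\varphi(Z))+l_\q(|s|+cx_0)$, which is $\q$-integrable by (l2). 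The function $f(s):=\EQ l_\q(s-X)-s$ is convex; (l3) gives $l_\q(x)\ge x$ so $f(s)\ge -\EQ X>-\infty$, while (l1) together with Fatou shows $f(s)\to+\infty$ as $s\to\pm\infty$. Hence the infimum is attained, monotonicity, cash invariance and convexity of $\rho_\q$ follow at once from the definition, and $\rho_\q(0)=\inf_{s\in\R}(l_\q(s)-s)=-l_\q^*(1)=0$ by (l3).

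For the dual representation, I compute
\begin{equation*}
\rho_\q^*(\p) = \sup_{X\in B_Z,\, s\in\R}\bigl(\EP[-X] + s - \EQ l_\q(s-X)\bigr).
\end{equation*}
The reparametrization $Y=s-X$ is a bijection of $B_Z\times\R$ onto itself, and since $\p$ is a probability the $s$-terms cancel, leaving $\rho_\q^*(\p) = \sup_{Y\in B_Z}(\EP Y - \EQ l_\q(Y))$. If $\p\not\ll\q$, testing with $Y = n 1_E$ for a Borel $E$ with $\p(E)>0$ and $\q(E)=0$ yields $\rho_\q^*(\p)=+\infty$; if $\p\ll\q$, pointwise Fenchel maximization gives $\rho_\q^*(\p) = \EQ[l_\q^*(d\p/d\q)]$. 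Fenchel--Moreau applied to the monotone convex functional $\rho_\q$ then produces (\ref{OrDual}), and attainment of the maximum follows from the existence of the minimizer $s^*$ of $f$ in Step 1: any $\zeta\in\partial l_\q(s^*-X)$ with $\EQ\zeta=1$ (available from the first-order condition $0\in\partial f(s^*)$) defines a density $d\p^*/d\q:=\zeta$ saturating the Fenchel inequality. The framework for this computation on the weighted space $B_Z$ is that of Section~5 of \cite{CL09}.

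For the final step, define $\alpha(\p):=\inf\{\EQ[l_\q^*(d\p/d\q)]: \q\in\mathcal{Q},\ \p\ll\q\}$ with $\inf\emptyset=+\infty$. By Jensen and (l3), $\EQ[l_\q^*(d\p/d\q)]\ge l_\q^*(1)=0$, so $\alpha\ge 0$; taking $\p=\q$ shows $\alpha(\q)=0$ for every $\q\in\mathcal{Q}$, which proves (A1). For (A2), the Fenchel--Young inequality applied pointwise at $x=\varphi(Z)$, $y=d\p/d\q$ and integrated under $\q$ gives
\begin{equation*}
\EP\varphi(Z) = \EQ[\varphi(Z)\, d\p/d\q] \le \EQ l_\q(\varphi(Z)) + \EQ l_\q^*(d\p/d\q) \le C + \EQ l_\q^*(d\p/d\q),
\end{equation*}
where $C:=\sup_{\q\in\mathcal{Q}}\EQ l_\q(\varphi(Z))<+\infty$ by (l2); taking the infimum over admissible $\q$ yields $\alpha(\p)\ge\EP\beta(Z)$ with $\beta:=\varphi-C$, which is increasing with $\beta(x)/x\to+\infty$, proving (A2). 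Finally, (\ref{OrDual}) gives the pointwise equivalence $\rho_\q(X)\le 0$ for all $\q\in\mathcal{Q}$ iff $\EP X + \EQ[l_\q^*(d\p/d\q)]\ge 0$ for all such pairs, which after taking infimum over $\q$ shows $\bigcap_{\q\in\mathcal{Q}}\{\rho_\q\le 0\} = \{X\in B_Z:\EP X+\alpha(\p)\ge 0\text{ for all }\p\in\mathcal{P}_Z\}$, so $A$ is of the form (\ref{acc}). The most delicate point is the attainment in (\ref{OrDual}) on the weighted space $B_Z$, which I would handle by adapting the OCE arguments of \cite{CL09}; everything else reduces to the convex-analytic identities listed above.
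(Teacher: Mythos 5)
Your proposal is correct and follows essentially the same route as the paper: real-valuedness and the risk-measure properties from (l1)--(l3), the dual representation via the optimized-certainty-equivalent duality on Orlicz hearts (both you and the paper delegate the delicate attainment on the weighted space to Theorem 4.6 and Section 5.4 of \cite{CL09}), the penalty $\alpha(\p)=\inf_{\q\in{\cal Q},\,\p\ll\q}\EQ l^*_{\q}(d\p/d\q)$, (A1) by Jensen and (l3), and (A2) by Fenchel--Young at $x=\varphi(Z)$ with $\beta=\varphi-\sup_{\q}\EQ l_{\q}(\varphi(Z))$. The only cosmetic differences are that you obtain $\rho_{\q}(0)=0$ directly from $\inf_{s}(l_{\q}(s)-s)=-l^*_{\q}(1)$ rather than via the dual representation, and you spell out the conjugate computation (reparametrization $Y=s-X$) that the paper cites.
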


\subsection{Robust average value at risk}
\label{ss:ravar}

Now assume, as in Section \ref{sec:ss}, that $\Omega$ is a closed subset of 
$\prod_{t=1}^T ([a_t,b_t] \times \mathbb{R}^{J}_+)$ for numbers $0 < a_t \le b_t$, and consider the filtration 
${\cal F}_t:=\sigma(S^j_s : j = 0, \dots, J, \, s \le t)$ generated by $S^j_t(\omega) = \omega^j_t$. 
We choose a continuous function $Z: \Omega \to \mathbb{R}$ such that $Z \ge 1 + \sum_{j,t \ge 1} \tilde{S}^j_t$. 
Then the sublevel sets $\crl{\omega \in \Omega : Z(\omega) \le z}$, $z \in \mathbb{R}_+$, are compact, and 
all $\tilde{S}^j_t$ belong to $C_Z$.

For a fixed level $0 < \lambda \le 1$ and $\q$ from a given non-empty subset ${\cal Q} \subseteq {\cal P}_Z$,
consider the average value at risk
\[
\AV^{\mathbb Q}(X):=\frac{1}{\lambda}\int_0^{\lambda} \mbox{VaR}_u^{\mathbb Q}(X) du, \quad X \in B_Z.
\]
It is well-known (see e.g. \cite{FS}) that it can be written as
$$
\AV^{\q}(X) = \min_{s \in \mathbb{R}} \brak{\frac{\EQ(s-X)^+}{\lambda} - s}, \quad X \in B_Z,
$$
and has a dual representation of the form 
$$
\AV^{\q}(X) = \max_{\p \ll \q, \, d\p/d\q \le 1/\lambda} \EP[-X], \quad X \in B_Z.
$$
The corresponding robust acceptance set is 
$$
A = \bigcap_{\q \in {\cal Q}} \crl{X \in B_Z : \AV^{\q}(X) \le 0} + B^+ 
$$

Let us assume the assets $S^0, \dots, S^J$ can be traded dynamically subject to proportional transaction costs
if positions in $S^1, \dots, S^J$ are rebalanced given by $\varepsilon_1, \dots, \varepsilon_J \ge 0$.
Moreover, there exists a family of derivatives with discounted payoffs $(H_i)_{i \in I} \subseteq C_Z$ that can be traded statically
with bid and ask prices $h^-_i, h^+_i \in \mathbb{R}$. The resulting set of discounted trading gains $G$ consists of outcomes 
of the form 
\be \label{GAV}
\sum_{t=1}^T \sum_{j=1}^J (\vartheta^j_t \Delta \tilde{S}^j_t - \varepsilon_j |\Delta \vartheta^j_{t}\tilde{S}^j_{t-1}|) + 
\sum_{i \in I} (\theta_i H_i - \theta^+_i h^+_i + \theta^-_i h^-_i),
\ee
where $(\vartheta_t)$ is a $J$-dimensional $({\cal F}_t)$-predictable strategy
and $\theta \in \mathbb{R}^I_0$. Under these conditions, one has the following:

\begin{proposition} \label{prop:AVaR}
If ${\cal Q}$ is convex, $\sigma({\cal P}_Z,C_Z)$-closed and satisfies
\be \label{phicond}
\sup_{\q \in {\cal Q}} \EQ \varphi(Z) < + \infty \mbox{ for 
an increasing function }\varphi : [1,+\infty) \to \mathbb{R} \mbox{ such that } \lim_{x \to +\infty} 
\frac{\varphi(x)}{x} = + \infty,
\ee
the following are equivalent:
\begin{itemize}
\item[{\rm (i)}] there exist no $X \in G-A$ and $\varepsilon \in \mathbb{R}_{++}$ such that $X \ge \varepsilon$
\item[{\rm (ii)}] there exists no $X \in G-A$ such that $X(\omega) > 0$ for all $\omega \in \Omega$
\item[{\rm (iii)}] ${\cal M} \not= \emptyset$
\item[{\rm (iv)}] $\phi$ is real-valued on $B_Z$ and 
$\phi(X) = \max_{\p \in {\cal M}} \EP X$ for all $X \in U_Z$,
\end{itemize}
where ${\cal M}$ is the set of all probability measures $\p \in {\cal P}_Z$ satisfying
\begin{itemize}
\item[{\rm a)}] $(1-\varepsilon_j)\tilde{S}^j_{t} \le \EP[\tilde{S}^j_T \mid {\cal F}_{t}] \le (1+\varepsilon_j) \tilde{S}^j_{t}$ for all
$j=1, \dots, J$ and $t =0, \dots, T-1$
\item[{\rm b)}] $h^-_i \le \EP H_i \le h^+_i$ for all $i \in I$
\item[{\rm c)}] $d \p/d \q \le 1/\lambda$ for some $\q \in {\cal Q}$.
\end{itemize}
\end{proposition}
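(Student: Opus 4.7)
The strategy is to apply Theorem \ref{thm:main} to the given $G$ and $A$, and then identify the conjugate $\phi^{\ast}$ with the convex indicator of ${\cal M}$. First I verify the hypotheses \eqref{condcall} and \eqref{condusc}. The average value at risk $\AV^{\q}$ is the transformed loss risk measure associated with $l_{\q}(x)=x^{+}/\lambda$, which verifies (l1)--(l3), while \eqref{phicond} supplies (l2). Hence Lemma \ref{lemma:oce} represents $A$ in the form \eqref{acc} with a penalty $\alpha:{\cal P}_Z\to\mathbb{R}_{+}\cup\{+\infty\}$ satisfying (A1) and (A2), and Proposition \ref{prop:A} yields \eqref{condcall}. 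For \eqref{condusc} I appeal to Proposition \ref{prop:dualcond}: given $X=g-a_{0}-b\in U_Z\cap(G-A)$ with $g\in G$, $a_{0}\in A_{\rho}:=\bigcap_{\q\in{\cal Q}}\{Y:\rho_{\q}(Y)\le 0\}$ and $b\in B^{+}$, I approximate $g$ from above by continuous elements of $G$, exploiting continuity of the transaction costs $|\cdot|$ and of the static cost $h$, and mollification of the predictable strategy pieces; this produces $Y_{n}\in C_Z\cap(G-A)$ with $\EP Y_{n}\downarrow\EP X$.

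With both conditions in hand, Theorem \ref{thm:main} delivers the equivalences (i)~$\Leftrightarrow$~(ii) together with the dual representation $\phi(X)=\max_{\p\in{\cal P}_Z}(\EP X-\phi^{\ast}(\p))$ for $X\in U_Z$. Since $0\in G\cap A$ one has $\phi^{\ast}(\p)\ge 0$, so $\phi^{\ast}(\p)\le 0$ is equivalent to $\phi^{\ast}(\p)=0$. It therefore suffices to show that $\{\p\in{\cal P}_Z:\phi^{\ast}(\p)=0\}$ coincides with the set ${\cal M}$ described by a)--c); then (iii) is precisely the no-arbitrage equivalence from the theorem, and (iv) is the dual representation restricted to ${\cal M}$.

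For this identification I use the reformulation $\phi(X)=\inf_{g\in G}\sup_{\q\in{\cal Q}}\rho_{\q}(g-X)$, which is immediate from the cash invariance and monotonicity of each $\rho_{\q}$ together with the definition of $A$. The dual of the average value at risk gives $A_{\rho}=\{Y\in B_Z:\EP Y\ge 0\text{ for all }\p\in{\cal M}_{0}\}$, where
\[
{\cal M}_{0}:=\bigcup_{\q\in{\cal Q}}\{\p\in{\cal P}_Z:\p\ll\q,\ d\p/d\q\le 1/\lambda\}.
\]
Proposition \ref{prop:dualcond} and the decomposition $G-A=G-A_{\rho}-B^{+}$ then yield the additive split $\phi^{\ast}(\p)=\phi^{\ast}_{G}(\p)+\sup_{a_{0}\in A_{\rho}}\EP[-a_{0}]$. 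By Proposition \ref{prop:trans} applied to $G$ in the proportional-transaction-cost case, $\phi^{\ast}_{G}(\p)=0$ precisely under a) and b), and equals $+\infty$ otherwise. A bipolar argument, for which the convexity and $\sigma({\cal P}_Z,C_Z)$-closedness of ${\cal Q}$ (propagating to ${\cal M}_{0}$) are crucial, shows that $\sup_{a_{0}\in A_{\rho}}\EP[-a_{0}]$ is $0$ if $\p\in{\cal M}_{0}$ (condition c) and $+\infty$ otherwise. Combining, $\phi^{\ast}(\p)=0$ exactly on ${\cal M}$, and the proof is complete.

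The main obstacles I expect are (a) verifying \eqref{condusc}, since elements of $G$ involve only measurable, not continuous, predictable strategies and must be approximated from above within the USC framework; and (b) the bipolar identification ${\cal M}_{0}=\{\p:\EP Y\ge 0\ \forall Y\in A_{\rho}\}$ that underlies condition c), which is where the hypotheses of convexity and closedness of ${\cal Q}$ are actually used.
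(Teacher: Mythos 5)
Your overall strategy coincides with the paper's: establish \eqref{condcall} via Lemma \ref{lemma:oce} and Proposition \ref{prop:A}, identify $\phi^*$ as the convex indicator of ${\cal M}$ by splitting the dual bound into a $G$-part (handled as in Proposition \ref{prop:trans}) and an $A$-part (handled by the dual representation of $\AV^{\q}$ plus a separation argument using convexity and closedness of ${\cal Q}$), and then conclude through Proposition \ref{prop:dualcond} and Theorem \ref{thm:main}. Two steps, however, are not sound as written. First, your direct verification of \eqref{condusc} by approximating $X=g-a_0-b\in U_Z\cap(G-A)$ \emph{from above} by continuous elements of $G-A$ would fail: mollified predictable strategies do not dominate the original gains pointwise, the elements $a_0\in A_\rho$ are merely Borel and admit no continuous majorant inside $A_\rho$ in general, and $G-A$ is stable under subtracting, not adding, nonnegative functions. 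The paper avoids this entirely: it never approximates pointwise, but instead computes $\sup_{X\in C_Z\cap(G-A)}\EP X$ and bounds $\sup_{X\in U_Z\cap(G-A)}\EP X$ from above by $\sup_{X\in G,\,\EP X>-\infty}\EP X-\inf_{X\in A}\EP X$, shows both equal the indicator of ${\cal M}$, and then reads off \eqref{condusc} from the equivalence in Proposition \ref{prop:dualcond}. Your later ``additive split'' paragraph is in fact this argument (with the split stated as an equality where only the inequality is needed and safe, given the $\EP X:=-\infty$ convention on $G$), so the mollification paragraph should be discarded rather than repaired.

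Second, the assertion that closedness of ${\cal Q}$ ``propagates'' to ${\cal M}_0=\bigcup_{\q\in{\cal Q}}\{\p:\p\ll\q,\ d\p/d\q\le 1/\lambda\}$ hides the genuinely nontrivial step. If $\p_n=Y_n\cdot\q_n\to\p$ in $\sigma({\cal P}_Z,C_Z)$ with $\q_n\in{\cal Q}$ and $0\le Y_n\le 1/\lambda$, you need a subsequential limit $\q$ of $(\q_n)$ \emph{inside} ${\cal Q}$ to conclude $d\p/d\q\le 1/\lambda$; closedness alone does not supply this. The paper extracts $\sigma({\cal P}_Z,C_Z)$-compactness of ${\cal Q}$ from \eqref{phicond} via a de la Vall\'ee Poussin/Prokhorov tightness argument (as in Lemma \ref{lemma:comp}), passes to the limit in $\EP X\le\frac{1}{\lambda}\EQ X$ for $X\in C_Z^+$, and uses regularity of Borel measures on $\mathbb{R}^{(J+1)T}$ to upgrade this to $d\p/d\q\le 1/\lambda$. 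Convexity of ${\cal M}_0$ also requires a short computation (mixing both the densities and the reference measures). Once ${\cal M}_0$ is shown convex and closed, your bipolar/separating-hyperplane step is exactly the paper's, and the rest of your argument goes through.
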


\begin{Examples} \label{ExQ}
If $Z = 1 + \sum_{t,j \ge 1} \tilde{S}^j_t$, the integrability condition \eqref{phicond} is satisfied 
by the following four families of probability measures:\\
{\bf 1.} All $\q \in {\cal P}_Z$ such that $c^j_t \le \EQ (\tilde{S}^j_t)^2 \le C^j_t$ for given constants $0 \le c^j_t \le C^j_t$.\\
{\bf 2.} All $\q \in {\cal P}_Z$ such that $c^j_t \le \EQ [(\tilde{S}^j_t/\tilde{S}^j_{t-1}-1)^2 \mid {\cal F}_{t-1}] \le C^j_t$ for 
given constants $0 \le c^j_t \le C^j_t$.\\
{\bf 3.} All $\q \in {\cal P}_Z$ under which $Y^j_t = \log(\tilde{S}^j_t/\tilde{S}^j_{t-1})$, $j=1, \dots, J$, $t=1, \dots, T$, 
forms a Gaussian family with mean vector $(\EQ Y^j_t)$ in a bounded set $M \subseteq  \mathbb{R}^{JT}$ and covariance 
matrix $\mbox{Cov}^{\q}(Y^j_t,Y^k_s)$ in a bounded set
$\Sigma \subseteq \mathbb{R}^{JT \times JT}$.\\
{\bf 4.} The $\sigma({\cal P}_Z, C_Z)$-closed convex hull of any set ${\cal Q} \subseteq {\cal P}_Z$
satisfying \eqref{phicond}.\\[2mm]
It can easily be checked that the first two families are convex and $\sigma({\cal P}_Z,C_Z)$-closed.
But the third one is in general not convex. So to satisfy the assumptions of
Proposition \ref{prop:AVaR}, one has to pass to the  $\sigma({\cal P}_Z,C_Z)$-closed convex hull.
\end{Examples}

\subsection{Robust entropic risk measure}

As in Subsection \ref{ss:ravar}, suppose that $\Omega$ is a closed subset 
$\prod_{t=1}^T ([a_t,b_t] \times \mathbb{R}^{J}_+)^T$ for numbers $0 < a_t \le b_t$, consider the filtration
$({\cal F}_t)$ generated by $(S^j_t)$, $j = 0, \dots, J$, and let $Z : \Omega \to \mathbb{R}$ 
be a continuous function such that $Z \ge 1+ \sum_{j,t \ge 1} \tilde{S}^j_t$.

For a fixed risk aversion parameter $\lambda > 0$ and $\q$ in a given non-empty set 
${\cal Q} \subseteq {\cal P}_Z$, consider the entropic risk measure
$$
\Ent^{\q}(X) = \frac{1}{\lambda} \log \EQ \exp(- \lambda X), \quad X \in B_Z.
$$
It admits the alternative representations
$$
\Ent^{\q}(X) = \min_{s \in \mathbb{R}} \brak{\frac{\exp(\lambda s - 1 - \lambda X)}{\lambda} - s}
= \max_{\p \ll \q} \brak{\EP[-X] - \frac{1}{\lambda} \EQ\edg{\frac{d\p}{d\q} 
\log \frac{d\p}{d\q}}}, \quad X \in B_Z;
$$
see e.g. \cite{CL09}. The resulting robust acceptance set is 
$$
A = \bigcap_{\q \in {\cal Q}} \crl{X \in B_Z : \Ent^{\q}(X) \le 0} + B^+.
$$
If the set of discounted trading gains $G$ is as in \eqref{GAV}, one obtains the following 
variant of Proposition \ref{prop:AVaR}:

\begin{proposition} \label{prop:Ent}
If ${\cal Q}$ is convex, $\sigma({\cal P}_Z, C_Z)$-closed and satisfies 
\be \label{phicondexp}
\sup_{\q \in {\cal Q}} \EQ \exp(\varphi(Z)) < + \infty \mbox{ for 
an increasing function }\varphi : [1,+\infty) \to \mathbb{R} \mbox{ with }\lim_{x \to +\infty} 
\frac{\varphi(x)}{x} = + \infty,
\ee
the following are equivalent:
\begin{itemize}
\item[{\rm (i)}] 
there exist no $X \in G-A$ and $\varepsilon \in \mathbb{R}_{++}$ such that $X \ge \varepsilon$
\item[{\rm (ii)}] there exists no $X \in G-A$ such that $X(\omega) > 0$ for all $\omega \in \Omega$
\item[{\rm (iii)}]
there exists a $\p \in {\cal P}_Z$ such that $\EP X \le 0$ for all $X \in U_Z \cap (G-A)$
\item[{\rm (iv)}] 
$\phi$ is real-valued on $B_Z$ with $\phi(0) = 0$ and 
$\phi(X) = \max_{\p \in {\cal P}_Z} (\mathbb{E}^{\p} X - \eta(\p))$ for all $X \in U_Z$,
\end{itemize}
where $\eta : {\cal P}_Z \to \mathbb{R}_+ \cup \crl{+ \infty}$ is given by
$$
\eta(\p) = \left\{ \begin{array}{ll}
\inf_{\q \in {\cal Q}, \, \p \ll \q} \EQ \brak{\frac{d\p}{d\q} \log \frac{d\p}{d\q}}/\lambda & \mbox{ if $\p$ 
satisfies {\rm a)--b)} and } \p \ll \q \mbox{ for some } \q \in {\cal Q}\\
+ \infty & \mbox{ otherwise},
\end{array} \right.
$$
and {\rm a)--b)} are the same conditions as in Proposition \ref{prop:AVaR}.
\end{proposition}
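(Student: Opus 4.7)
The plan is to derive Proposition \ref{prop:Ent} from Theorem \ref{thm:main} by verifying its two hypotheses \eqref{condcall} and \eqref{condusc} in this setting and then identifying the conjugate $\phi^{*}$ explicitly with the penalty $\eta$.

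First I would realize the entropic risk measures as transformed-loss risk measures in the sense of Section \ref{sec:rm} by taking $l_{\q}(x) = \exp(\lambda x - 1)/\lambda$, uniformly in $\q \in {\cal Q}$. A direct computation gives $l_{\q}^{*}(y) = (y \log y)/\lambda$ on $\mathbb{R}_{+}$, so (l1) and (l3) hold immediately, and (l2) follows from \eqref{phicondexp} by choosing the auxiliary function as $\varphi/\lambda$: this makes $l_{\q}(\varphi(Z)/\lambda) = \exp(\varphi(Z))/(\lambda e)$, which is uniformly $\q$-integrable over ${\cal Q}$. Lemma \ref{lemma:oce} then represents $A$ in the form \eqref{acc} with a penalty $\alpha$ satisfying (A1) and (A2), and Proposition \ref{prop:A} converts this into \eqref{condcall}.

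The main obstacle is condition \eqref{condusc}. Monotonicity of $\phi$ (immediate from $A + B^{+} \subseteq A$) yields $\phi(X) \le \inf_{Y \in C_{Z},\, Y \ge X} \phi(Y)$, so only the reverse inequality needs work. My plan is to use the compactness of the sublevel sets of $Z$ to approximate every $X \in U_{Z}$ by a sequence $Y_{n} \in C_{Z}$ with $Y_{n} \downarrow X$ converging uniformly on each $\crl{Z \le k}$ (by Dini). For $m > \phi(X)$ with a decomposition $m - X = a - g$, $a \in A$, $g \in G$, the superlinear bound $\alpha(\p) \ge \EP \beta(Z)$ from (A2), combined with a Young-type inequality exploiting \eqref{phicondexp}, renders the sublevel sets $\crl{\alpha \le N}$ uniformly $Z$-integrable. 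This uniformly controls the tail $\EP [(Y_{n} - X) 1_{\crl{Z > k}}]$ over the dual measures actually witnessing $\phi$, so that $a + X - Y_{n} + \delta_{n}$ stays in $A$ for some vanishing constant $\delta_{n}$. It follows that $\phi(Y_{n}) \to \phi(X)$, yielding \eqref{condusc}.

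Once both hypotheses of Theorem \ref{thm:main} are secured, its parts (i), (iv), (v), (vi) directly correspond to (i)--(iv) of the proposition, so it only remains to identify $\phi^{*}$ with $\eta$. By Proposition \ref{prop:dualcond} together with the Minkowski decomposition of $G - A$, one has $\phi^{*}(\p) = \phi_{G}^{*}(\p) - \inf_{a \in A} \EP a$. The first summand is the proportional-transaction-cost penalty already determined in Subsection \ref{subsec:semis}: it equals $0$ under conditions a) and b) and $+ \infty$ otherwise. For the second, inserting $\p$ itself into the dual representation of $\rho_{\q}$ from Lemma \ref{lemma:oce} gives $\EP a \ge -\EQ [(d\p/d\q) \log (d\p/d\q)]/\lambda$ for every $\q \in {\cal Q}$ with $\p \ll \q$, and the OCE minimizers of $\rho_{\q}$ realize this bound; minimizing over such $\q$ then recovers precisely the entropic contribution to $\eta$. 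Summing the two pieces produces $\eta(\p)$, completing the proof.
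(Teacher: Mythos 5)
Your first step (realizing $\Ent^{\q}$ as the transformed loss risk measure with $l_{\q}(x)=\exp(\lambda x-1)/\lambda$, checking (l1)--(l3) with auxiliary function $\varphi/\lambda$, and invoking Lemma \ref{lemma:oce} and Proposition \ref{prop:A} to get \eqref{condcall}) is exactly what the paper does. Your treatment of \eqref{condusc} is a different route: the paper obtains \eqref{condusc} only at the very end, as a byproduct of the identity $\phi^*(\p)=\sup_{X\in U_Z\cap(G-A)}\EP X$ via Proposition \ref{prop:dualcond}, whereas you attack it directly by approximating $X\in U_Z$ from above by continuous $Y_n\downarrow X$ and showing $a+X-Y_n+\delta_n\in A$. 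Your sketch can be completed (split the dual measures of $\rho_{\q}$ according to whether $\eta_{\q}(\p)\le N$, use (A2) and Dini on $\crl{Z\le k}$ for the small-penalty part and superlinearity of $\beta$ for the rest), but as written it omits these estimates.

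The genuine gap is in the identification of $\phi^*$ with $\eta$. The inequality $\eta(\p)\ge\sup_{a\in A}\EP[-a]\ge\phi^*(\p)$ on ${\cal M}_G$ is the easy direction and is essentially what your ``inserting $\p$ into the dual representation'' gives. The hard direction is $\phi^*(\p)\ge\inf_{\q\in{\cal Q},\,\p\ll\q}\eta_{\q}(\p)$, i.e.\ that $\sup_{X\in C_Z\cap A}\EP[-X]$ attains $\inf_{\q}\eta_{\q}(\p)$. Your phrase ``the OCE minimizers of $\rho_{\q}$ realize this bound; minimizing over such $\q$ then recovers precisely the entropic contribution'' conflates $\sup_X\inf_{\q}$ with $\inf_{\q}\sup_X$: the maximizer $X$ attaining $\eta_{\q_0}(\p)=\EP[-X]-\rho_{\q_0}(X)$ for a near-optimal $\q_0$ need not satisfy $\rho_{\q}(X)\le 0$ for the \emph{other} $\q\in{\cal Q}$, so it need not lie in $A$, and membership in $A$ requires simultaneous acceptability under all of ${\cal Q}$. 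Closing this gap is the heart of the paper's proof: one constructs $\tilde\varphi$ with $\tilde\varphi(x)/x\to+\infty$ and $\varphi(x)/\tilde\varphi(x)\to+\infty$, sets $\tilde Z=\exp(\tilde\varphi(Z))$, shows via \eqref{phicondexp} and Prokhorov that ${\cal Q}$ is $\sigma({\cal P}_{\tilde Z},C_{\tilde Z})$-compact, checks that $\q\mapsto\Ent^{\q}(X)$ is concave and $\sigma({\cal P}_{\tilde Z},C_{\tilde Z})$-continuous for $X\in C_Z$, and then applies Ky Fan's minimax theorem to exchange $\inf_{\q}$ and $\sup_{X}$. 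This is precisely where the exponential moment condition \eqref{phicondexp} (as opposed to the weaker \eqref{phicond} of Proposition \ref{prop:AVaR}) and the convexity and $\sigma({\cal P}_Z,C_Z)$-closedness of ${\cal Q}$ are used; your argument never invokes these hypotheses, which is a sign that a key step is missing. Relatedly, the asserted ``Minkowski decomposition'' $\phi^*(\p)=\phi^*_G(\p)-\inf_{a\in A}\EP a$ does not follow from Proposition \ref{prop:dualcond} alone, since $C_Z\cap(G-A)$ does not split as $(C_Z\cap G)-(C_Z\cap A)$; the paper only uses the decomposition as an upper bound and closes the loop with the separate minimax lower bound.
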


\begin{Examples} \label{ExQexp} For $Z = 1 + \sum_{j,t \ge 1} \tilde{S}^j_t$, the following are convex
$\sigma({\cal P}_Z, C_Z)$-closed subsets of ${\cal P}_Z$ satisfying \eqref{phicondexp}:\\
{\bf 1.} 
All $\q \in {\cal P}_Z$ such that $c^j_t \le \EQ (\tilde{S}^j_t)^2 \le C^j_t$ 
and $\EQ \exp(\varepsilon^j_t (\tilde{S}^j_t)^2) \le D^j_t$ for given constants $0 \le c^j_t \le C^j_t$ and
$\varepsilon^j_t, D^j_t > 0$.\\
{\bf 2.} All $\q \in {\cal P}_Z$ such that $c^j_t \le \EQ [(\tilde{S}^j_t/\tilde{S}^j_{t-1}-1)^2 \mid {\cal F}_{t-1}] \le C^j_t$ 
and $\EQ \exp(\varepsilon^j_t (\tilde{S}^j_t)^2) \le D^j_t$ for given constants $0 \le c^j_t \le C^j_t$ and
$\varepsilon^j_t, D^j_t > 0$.\\
{\bf 3.} The $\sigma({\cal P}_Z, C_Z)$-closed convex hull of any set ${\cal Q} \subseteq {\cal P}_Z$
satisfying \eqref{phicondexp}.\\[2mm]
Note that Example \ref{ExQ}.3 also satisfies condition \eqref{phicondexp}. Therefore, its $\sigma({\cal P}_Z,C_Z)$-closed
convex hull fulfills the assumptions of Proposition \ref{prop:Ent}.
\end{Examples}

\appendix
\setcounter{equation}{0}
\section{Proofs}
\label{sec:proofs}

\subsection{Representation of increasing convex functionals}

In preparation for the proof of Theorem \ref{thm:main} we first derive representation results for
general increasing convex functionals on $C_Z$ and $U_Z$. As in Section \ref{sec:main}, $\Omega$ 
is a non-empty subset of $(\mathbb{R}_{++} \times \mathbb{R}^{J})^T$ and $Z : \Omega \to [1,+\infty)$ a continuous function 
such that $\crl{\omega \in \Omega : Z(\omega) \le z}$ is compact for all $z \in \mathbb{R}_+$. If $(X_n)$ is 
a sequence of functions $X_n : \Omega \to \mathbb{R}$ 
decreasing pointwise to a function $X : \Omega \to \mathbb{R}$, we write
$X_n \downarrow X$. The space $C_Z$ of continuous functions $X : \Omega \to \mathbb{R}$
such that $X/Z$ is bounded forms a Stone vector lattice; that is, it is a linear space 
with the property that for all $X,Y \in C_Z$, the point-wise minima $X \wedge Y$ 
and $X \wedge 1$ also belong to $C_Z$. Let $ca^+_Z$ be the set of all Borel measures 
$\mu$ satisfying $\ang{Z, \mu} := \int Z d\mu < + \infty$. We call a functional $\psi: C_{Z} \to \mathbb{R}$
increasing if $\psi(X) \ge \psi(Y)$ for $X \ge Y$ and define
the convex conjugate $\psi^*_{C_{Z}} : ca^+_{Z} \to \mathbb{R} \cup \crl{+ \infty}$ by 
$$
\psi^*_{C_{Z}}(\mu) := \sup_{X \in C_{Z}} (\ang{X, \mu} - \psi(X)).
$$

\begin{theorem} \label{thm:rep}
Let $\psi: C_Z \to \mathbb{R}$ be an increasing convex functional with the property that
for every $X \in C_{Z}$ there exists a constant $\varepsilon > 0$ such that
\begin{equation} \label{condepsilon}
\lim_{z \to +\infty} \psi(X + \varepsilon (Z - z)^+) = \psi(X).
\end{equation}
Then
\be \label{repC}
\psi(X)= \max_{\mu \in ca^+_{Z}} (\ang{X,\mu} -\psi^\ast_{C_{Z}}(\mu))
\qquad\mbox{for all } X \in C_{Z}.
\ee
\end{theorem}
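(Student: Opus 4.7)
The plan is to prove the representation by combining Fenchel--Moreau duality on the normed space $(C_Z, \| \cdot \|_Z)$, with $\|X\|_Z := \sup_\omega |X(\omega)|/Z(\omega)$, with a Daniell--Stone argument identifying positive continuous linear functionals on $C_Z$ with measures in $ca^+_Z$. First I would verify that $\psi$ is continuous on $(C_Z, \| \cdot \|_Z)$: if $\|X - Y\|_Z \le \varepsilon \le 1$, then $Y - \varepsilon Z \le X \le Y + \varepsilon Z$ pointwise, and applying monotonicity together with convexity to the identity $Y + \varepsilon Z = (1-\varepsilon) Y + \varepsilon (Y + Z)$ (and analogously for the lower bound) yields
\[
|\psi(X) - \psi(Y)| \le \varepsilon \max\bigl(\psi(Y+Z) - \psi(Y),\, \psi(X+Z) - \psi(X)\bigr),
\]
which is locally bounded since $\psi$ is real-valued and monotone.

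Fix $X_0 \in C_Z$. Since $\psi$ is convex and continuous, the Hahn--Banach theorem (separating $\epi \psi$ from $(X_0, \psi(X_0) - \delta)$ for $\delta > 0$) shows that $\partial \psi(X_0) \subseteq C_Z^*$ is non-empty; pick any $\ell_0 \in \partial \psi(X_0)$, which satisfies $\psi(X_0) = \ell_0(X_0) - \sup_{X \in C_Z}(\ell_0(X) - \psi(X))$. Monotonicity of $\psi$ forces $\ell_0$ to be positive: for $X \ge 0$ and $t > 0$, the subgradient inequality $\psi(X_0 - tX) \ge \psi(X_0) - t \ell_0(X)$ combined with $\psi(X_0 - tX) \le \psi(X_0)$ yields $\ell_0(X) \ge 0$. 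Next, applying \eqref{condepsilon} at $X_0$ and the subgradient inequality at $X_0 + \varepsilon(Z - z)^+$ gives $\varepsilon \ell_0((Z - z)^+) \le \psi(X_0 + \varepsilon(Z-z)^+) - \psi(X_0) \to 0$ as $z \to +\infty$, i.e.\ the tightness estimate $\ell_0((Z - z)^+) \to 0$.

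The main technical obstacle will be to represent $\ell_0$ by a Borel measure $\mu_0 \in ca^+_Z$ via the Daniell--Stone theorem. Since $C_Z$ is a Stone vector lattice (it contains $X \wedge 1$ for each $X \in C_Z$) and $\ell_0$ is positive, it suffices to establish $\sigma$-continuity from above: $\ell_0(X_n) \to 0$ whenever $X_n \downarrow 0$ pointwise in $C_Z$. I would decompose $X_n = (X_n \wedge \|X_1\|_Z z) + (X_n - \|X_1\|_Z z)^+$: the tail term vanishes on $\{Z \le z\}$ and is dominated pointwise by $\|X_1\|_Z (Z - z)^+$, so its $\ell_0$-value is arbitrarily small for $z$ large by the tightness estimate. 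The truncated part $Y_n$ is uniformly bounded by $M := \|X_1\|_Z z$ and still decreases to $0$; Dini's theorem yields $\delta_n := \sup_{K_{z'+1}} Y_n \to 0$ on the compact sublevel set $K_{z'+1} = \{Z \le z'+1\}$, and since $(Z - z')^+ \ge 1$ on $\{Z \ge z'+1\}$, one has the pointwise bound $Y_n \le \delta_n + M (Z - z')^+$, giving $\ell_0(Y_n) \le \delta_n \ell_0(1) + M \ell_0((Z-z')^+)$. Choosing $z'$ large and then $n$ large makes $\ell_0(X_n)$ arbitrarily small.

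Once Daniell--Stone produces $\mu_0 \in ca^+_Z$ with $\ell_0(X) = \langle X, \mu_0 \rangle$ on $C_Z$ (and $\langle Z, \mu_0 \rangle = \ell_0(Z) < + \infty$ by continuity of $\ell_0$), the identity $\psi(X_0) = \langle X_0, \mu_0 \rangle - \psi^\ast_{C_Z}(\mu_0)$ attains the maximum in \eqref{repC}. The reverse inequality $\psi(X_0) \ge \langle X_0, \mu \rangle - \psi^\ast_{C_Z}(\mu)$ for all $\mu \in ca^+_Z$ is immediate from the definition of $\psi^\ast_{C_Z}$, completing the proof.
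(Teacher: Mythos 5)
Your proof is correct and follows essentially the same route as the paper: a supporting linear functional at $X_0$ (your subgradient $\ell_0$ is exactly the paper's Hahn--Banach extension dominated by $Y \mapsto \psi(X_0+Y)-\psi(X_0)$), positivity from monotonicity, the Daniell--Stone condition verified via Dini's lemma on the compact sublevel sets of $Z$ together with the tail control $\ell_0((Z-z)^+)\to 0$ extracted from \eqref{condepsilon}, and then the Daniell--Stone theorem to identify $\ell_0$ with a measure in $ca^+_Z$. The only cosmetic differences are that you first establish norm-continuity to invoke subdifferentiability (the paper's algebraic Hahn--Banach argument avoids this) and that you check $\sigma$-continuity directly on the linear functional rather than on the convex dominating function; both variants are sound.
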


\begin{proof}
Fix $X \in C_{Z}$. It is immediate from the definition of $\psi^*_{C_{Z}}$ that 
\be \label{Young}
\psi(X) \ge \sup_{\mu \in ca^+_{Z}} (\ang{X,\mu} - \psi^*_{C_{Z}}(\mu)).
\ee
On the other hand, the Hahn--Banach extension theorem (for example, in the form of Theorem 5.53 in \cite{Ali})
applied to the trivial subspace $\crl{0} \subseteq C_Z$ and the dominating function
$\psi_X : C_Z \to \mathbb{R}$, given by $\psi_X(Y) := \psi(X+Y) - \psi(X)$, yields the existence of a
linear functional $\zeta_X : C_Z \to \mathbb{R}$ that is dominated by $\psi_X$. Since $\psi_X$
is increasing, the same must be true for $\zeta_X$. So if we can show that for every 
sequence $(X_n)$ in $C_{Z}$ satisfying $X_n \downarrow 0$, there exists 
a constant $\eta > 0$ such that $\psi_X(\eta X_n) \downarrow 0$, 
then $\zeta_X(X_n) \downarrow 0$, and we obtain from the Daniell--Stone theorem 
(see e.g., Theorem 4.5.2 in \cite{D}) that $\zeta_X$ is of the form $\zeta_X(Y) = \ang{Y,\mu_X}$ for a 
measure $\mu_X \in ca^+_{Z}$. It follows that $\ang{X,\mu_X} - \psi(X) \ge \ang{X+Y,\mu_X} - \psi(X+Y)$
for all $Y \in C_Z$. In particular, $\psi^*_{C_{Z}}(\mu_X) = \ang{X,\mu_X} - \psi(X)$,
and the representation \eqref{repC} follows from \eqref{Young}.

Now, choose an $\varepsilon > 0$ such that \eqref{condepsilon} holds and $m > 0$
so that $X_1 \le m Z$. Set $\eta = \varepsilon/(4m)$ and fix $\delta > 0$. 
There exists a $z \in \mathbb{R}_+$ such that $\psi_X(\varepsilon (Z-z)^+) \le \delta$, and by our
assumptions on $Z$, the set $\Lambda = \crl{Z \le 2 z}$ is compact. Therefore, one obtains from Dini's lemma that 
$$
x_n := \max_{\omega \in \Lambda} X_n(\omega) \downarrow 0.
$$
Since $x \mapsto \psi_X(x)$ is a convex function from $\mathbb{R}$ to $\mathbb{R}$,
it is continuous. In particular, there exists an $n_0$ such that $\psi_X(2 \eta x_n) \le \delta$
for all $n \ge n_0$. Moreover, it follows from
$$
X_n \le X_n 1_{\crl{Z \le 2 z}} + X_1 1_{\crl{Z > 2 z}}
\le x_n 1_{\crl{Z \le 2 z}} + m Z 1_{\crl{Z > 2 z}} \le x_n + 2m(Z - z)^+
$$
that
$$
\frac{X_n - x_n}{2m} \le (Z - z)^+,
$$
and therefore, 
$$
\psi_X \brak{2 \eta (X_n - x_n)} = \psi_X \brak{\varepsilon \frac{X_n - x_n}{2m}} \le \delta
\quad \mbox{for all } n.
$$
This gives
$$
\psi_X(\eta X_n) \le \frac{\psi_X(2 \eta x_n) + \psi_X(2 \eta (X_n - x_n))}{2} \le \delta
\quad \mbox{for all } n \ge n_0.
$$
Hence, $\psi_X(\eta X_n) \downarrow 0$, and the proof is complete.
\end{proof}

To extend the representation \eqref{repC} beyond $C_Z$, we need the following version 
of condition \eqref{condepsilon}:
\be \label{everycall}
\lim_{z \to +\infty} \psi(n(Z - z)^+) = \psi(0) \quad \mbox{for every } n \in \mathbb{N}.
\ee
The subsequent lemma shows that it implies \eqref{condepsilon}.

\begin{lemma} \label{lemma:cond}
An increasing convex functional $\psi : C_Z \to \mathbb{R}$ with the property \eqref{everycall}
also satisfies \eqref{condepsilon}.
\end{lemma}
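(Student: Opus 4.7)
The plan is to establish condition \eqref{condepsilon} by combining a convexity splitting argument with the scalar continuity of $\psi$ along rays. Note first that any $\varepsilon>0$ works, because $\psi$ is increasing and $(Z-z)^+$ decreases pointwise to $0$ as $z\to+\infty$, so $\psi(X+\varepsilon(Z-z)^+)\ge\psi(X)$ is automatic; hence only the limsup inequality $\limsup_{z\to+\infty}\psi(X+\varepsilon(Z-z)^+)\le\psi(X)$ needs to be shown.

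For any $\lambda\in(0,1)$, write the convex combination
\[
X+\varepsilon(Z-z)^+ \;=\; \lambda\cdot\frac{X}{\lambda} + (1-\lambda)\cdot\frac{\varepsilon(Z-z)^+}{1-\lambda},
\]
and invoke convexity of $\psi$ to get
\[
\psi\!\brak{X+\varepsilon(Z-z)^+} \;\le\; \lambda\,\psi\!\brak{X/\lambda} + (1-\lambda)\,\psi\!\brak{\tfrac{\varepsilon}{1-\lambda}(Z-z)^+}.
\]
For fixed $\lambda$, pick an integer $n\ge \varepsilon/(1-\lambda)$. Since $(Z-z)^+\ge 0$ and $\psi$ is increasing, $\psi\!\brak{\tfrac{\varepsilon}{1-\lambda}(Z-z)^+}\le\psi(n(Z-z)^+)$, and the right-hand side tends to $\psi(0)$ as $z\to+\infty$ by hypothesis \eqref{everycall}. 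Therefore
\[
\limsup_{z\to+\infty}\psi\!\brak{X+\varepsilon(Z-z)^+}\;\le\;\lambda\,\psi(X/\lambda)+(1-\lambda)\,\psi(0).
\]

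It remains to let $\lambda\uparrow 1$ on the right-hand side. The key observation is that the map $\mu\mapsto\psi(\mu X)$ is a real-valued convex function from $(0,+\infty)$ to $\mathbb{R}$ (since $\mu X$ is affine in $\mu$ and $\psi$ is convex and finite on $C_Z$), hence continuous on $(0,+\infty)$. In particular $\psi(X/\lambda)\to\psi(X)$ as $\lambda\uparrow 1$, yielding $\limsup_{z\to+\infty}\psi(X+\varepsilon(Z-z)^+)\le\psi(X)$, which completes the proof.

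The only real subtlety is the final continuity step — one might worry about continuity of the infinite-dimensional convex functional $\psi$, but we only need continuity of its restriction to the one-dimensional ray through $X$, which is automatic from finite-valued convexity on $\mathbb{R}$. The splitting trick is what lets \eqref{everycall} (which controls only pure multiples of $(Z-z)^+$) be combined with an arbitrary $X\in C_Z$.
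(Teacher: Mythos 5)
Your proof is correct and follows essentially the same route as the paper's: the convex splitting $X+\varepsilon(Z-z)^+=\lambda\cdot\frac{X}{\lambda}+(1-\lambda)\cdot\frac{\varepsilon(Z-z)^+}{1-\lambda}$, the use of \eqref{everycall} on the second term, and continuity of the one-dimensional convex map $\lambda\mapsto\psi(X/\lambda)$ to let $\lambda\uparrow 1$. The only additions are the explicit monotonicity reduction to an integer $n$ (which the paper leaves implicit) and the observation that the liminf inequality is automatic; both are fine.
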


\begin{proof}
If \eqref{everycall} holds, one has for any $X \in C_{Z}$,
$\varepsilon \in \mathbb{R}_+$ and $\lambda \in (0,1)$,
$$
\psi(X + \varepsilon (Z - z)^+) \le
\lambda \psi \brak{\frac{X}{\lambda}} + (1-\lambda) \psi \brak{\varepsilon \frac{(Z - z)^+}{1-\lambda}}
$$
and
$$
\psi \brak{\varepsilon \frac{(Z - z)^+}{1-\lambda}} \to \psi(0)
\quad \mbox{for } z \to + \infty.
$$
Moreover, since $z \mapsto \psi(z X)$ is a real-valued convex 
function on $\mathbb{R}$, it is continuous. In particular, 
$$
\lambda \psi \brak{\frac{X}{\lambda}} \to \psi(X) \quad \mbox{and} \quad
(1-\lambda) \psi(0) \to 0 \quad \mbox{for } \lambda \to 1.
$$
This shows that $\psi(X + \varepsilon(Z-z)^+) \to \psi(X)$ for $z \to + \infty$.
\end{proof}
Before giving conditions under which a representation of the form \eqref{repC} holds on $U_Z$,
we note that for a given $X \in U_Z$,
$$ \label{approxXn}
X_n(\omega) := \sup_{\omega' \in \Omega} \brak{\frac{X(\omega')}{Z(\omega')} - n \sum_{j,t}
|\omega^j_t - \omega'^j_t|} Z(\omega)
$$
defines a sequence in $C_{Z}$ such that $X_n \downarrow X$. In addition to this fact, 
we need the following two auxiliary results:

\begin{lemma} \label{lemma:comp}
For every increasing convex functional $\psi : C_{Z} \to \mathbb{R}$ the following hold:

\begin{itemize}
\item[{\rm (i)}]
There exists an increasing convex function 
$\varphi : \mathbb{R}_+ \to \mathbb{R} \cup \crl{+\infty}$ satisfying 
$\lim_{x \to + \infty} \varphi(x)/x = +\infty$ such that $\psi^*_{C_{Z}}(\mu) \ge \varphi(\ang{Z,\mu})$ 
for all $\mu \in ca^+_{Z}$.
\item[{\rm (ii)}]
If $\psi$ satisfies \eqref{everycall}, the sublevel sets $\{\mu \in ca^+_{Z} : \psi^*_{C_{Z}}(\mu) \le a\}$,
$a \in \mathbb{R}$, are $\sigma(ca^+_{Z},C_{Z})$-compact.
\end{itemize}
\end{lemma}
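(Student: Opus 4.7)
For part (i), my plan is to build $\varphi$ from the defining inequality for $\psi^*_{C_Z}$ applied to the one-parameter family $nZ$, $n \in \mathbb{N}$. Since $nZ \in C_Z$ for every $n$, one has $\psi^*_{C_Z}(\mu) \ge n \ang{Z,\mu} - \psi(nZ)$, so setting
$$\varphi(x) := \sup_{n \in \mathbb{N}} \brak{n x - \psi(nZ)}, \qquad x \in \mathbb{R}_+,$$
produces a supremum of affine functions with non-negative slopes, hence a convex, increasing map into $\mathbb{R} \cup \crl{+\infty}$. Taking $n = 0$ shows $\varphi \ge -\psi(0) > -\infty$. For superlinearity, fix any $n_0 \in \mathbb{N}$ and observe $\varphi(x)/x \ge n_0 - \psi(n_0 Z)/x \to n_0$ as $x \to \infty$; sending $n_0 \to \infty$ then gives $\lim_{x \to \infty} \varphi(x)/x = +\infty$, while $\psi^*_{C_Z}(\mu) \ge \varphi(\ang{Z,\mu})$ is immediate from the construction.

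For part (ii), the key intermediate step is to establish uniform tightness and uniform $Z$-integrability on the sublevel set $\crl{\psi^*_{C_Z}(\mu) \le a}$. From (i), superlinearity of $\varphi$ forces $\ang{Z,\mu}$ to be uniformly bounded by some $M < \infty$, so the family has uniformly bounded total mass (using $Z \ge 1$). Next, for each $n \in \mathbb{N}$, \eqref{everycall} together with monotonicity of $\psi$ yields a $z_n \in \mathbb{R}_+$ with $\psi(n(Z-z_n)^+) \le \psi(0) + 1$; plugging $n(Z-z_n)^+ \in C_Z$ into the definition of $\psi^*_{C_Z}$ gives
$$n \int (Z - z_n)^+ d\mu \le a + \psi(0) + 1 \quad \mbox{for every } \mu \mbox{ in the sublevel set.}$$
Combined with the pointwise estimate $Z \cdot 1_{\crl{Z > 2z_n}} \le 2 (Z - z_n)^+$, this delivers $\sup_\mu \int_{\crl{Z > 2z_n}} Z\, d\mu \le 2(a + \psi(0) + 1)/n$, and hence also $\sup_\mu \mu(\crl{Z > 2z_n}) \le 2(a + \psi(0) + 1)/n$. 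Since the sublevel sets $\crl{Z \le 2z_n}$ of $Z$ are compact, uniform tightness follows.

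For compactness itself, Prokhorov's theorem supplies, for any net $(\mu_\alpha)$ in the sublevel set, a subnet converging narrowly (in the topology $\sigma(ca^+, C_b)$) to a finite Borel measure $\mu$. Testing against the bounded continuous function $Z \wedge m$ and sending $m \to \infty$ by monotone convergence shows $\int Z d\mu \le M$, so $\mu \in ca^+_Z$, and the same truncation transfers the tail bound $\int (Z - z_n)^+ d\mu \le (a + \psi(0) + 1)/n$ to $\mu$. To upgrade narrow to $\sigma(ca^+_Z, C_Z)$-convergence, for $X \in C_Z$ with $|X| \le c_X Z$ I introduce a continuous, compactly supported cutoff $\chi_n$ with $\chi_n \equiv 1$ on $\crl{Z \le z_n}$ and $\chi_n \equiv 0$ on $\crl{Z \ge 2z_n}$; then $X \chi_n \in C_b(\Omega)$ gives $\ang{X \chi_n, \mu_\alpha} \to \ang{X \chi_n, \mu}$, while $|X - X\chi_n| \le c_X Z \cdot 1_{\crl{Z > z_n}}$ has uniformly small integral by the previous paragraph. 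Choosing $n$ large first and then passing to the limit in $\alpha$ delivers $\ang{X, \mu_\alpha} \to \ang{X, \mu}$. Closedness of the sublevel set is free because $\psi^*_{C_Z}$ is a supremum of $\sigma(ca^+_Z, C_Z)$-continuous affine functionals, hence lower semicontinuous. The main obstacle is precisely the upgrade from narrow convergence to convergence against unbounded test functions in $C_Z$; this is exactly where \eqref{everycall} is indispensable in (ii), since only it produces the uniform $Z$-integrability (via the test functions $n(Z - z_n)^+$) that part (i) on its own would not provide.
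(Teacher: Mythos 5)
Your proof is correct and follows essentially the same route as the paper's. Part (i) is the paper's construction $\varphi(x)=\sup_{y\in\mathbb{R}_+}(xy-\psi(yZ))$ restricted to integer $y$, and part (ii) rests on the same two estimates: the uniform bound on $\ang{Z,\mu}$ coming from the superlinearity in (i), and the uniform tail control $n\ang{(Z-z_n)^+,\mu}\le a+\psi(0)+1$ extracted from \eqref{everycall}. The only genuine difference is how the compactness is finished: the paper identifies $(C_Z,ca^+_Z)$ with $(C_b,ca^+)$ via $X\mapsto X/Z$ and $\mu\mapsto Z\,d\mu$ and applies Prokhorov there, so that weak compactness of the image is literally $\sigma(ca^+_Z,C_Z)$-compactness of the sublevel set; you instead apply Prokhorov to the measures $\mu$ themselves and then upgrade narrow convergence to convergence against unbounded test functions by a cutoff argument---the same uniform $Z$-integrability, just spent by hand. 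One small indexing slip in that last step: your tail estimate controls $\int_{\crl{Z>2z_n}}Z\,d\mu$, not $\int_{\crl{Z>z_n}}Z\,d\mu$ (on the annulus $\crl{z_n<Z\le 2z_n}$ the pointwise comparison with $(Z-z_n)^+$ fails and Markov's inequality only gives the non-vanishing bound $2\ang{Z,\mu}$), so the cutoff $\chi_n$ should be taken $\equiv 1$ on $\crl{Z\le 2z_n}$ and supported in $\crl{Z\le 4z_n}$; with that adjustment the argument closes.
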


\begin{proof}
For every $\mu \in ca^+_{Z}$, one has
$$
\psi^*_{C_{Z}}(\mu) \ge \sup_{y \in \mathbb{R}_+} (\ang{yZ,\mu} - \psi(yZ)) = \varphi(\ang{Z,\mu})
$$
for the increasing convex function $\varphi : \mathbb{R}_+ \to \mathbb{R} \cup \crl{+\infty}$ given by 
$$
\varphi(x) := \sup_{y \in \mathbb{R}_+} (xy - \psi(y Z)).
$$ 
It follows from the fact that $\psi$ is real-valued that $\lim_{x \to +\infty} \varphi(x)/x = + \infty$.
This shows (i).

As the supremum of $\sigma(ca^+_{Z},C_{Z})$-continuous functions, $\psi^*_{C_Z}$ is 
$\sigma(ca^+_{Z},C_{Z})$-lower semicontinuous. Therefore, the sets
$\Lambda_a := \{\mu \in ca^+_{Z} : \psi^*_{C_{Z}}(\mu) \le a\}$ are $\sigma(ca^+_{Z},C_{Z})$-closed.
Moreover, every $\mu \in \Lambda_a$ satisfies
$$
m \ang{(Z- z)^+,\mu} - \psi(m(Z- z)^+) \le \psi^*_{C_{Z}}(\mu) \le a  \quad \mbox{for all } 
m, z \in \mathbb{R}_+.
$$
So if \eqref{everycall} holds, there exists for every $m \in \mathbb{R}_+$ a $z \in \mathbb{R}_+$ such that 
$$
\ang{(Z- z)^+,\mu} \le \frac{a + \psi(0) + 1}{m} \quad \mbox{for all } \mu \in \Lambda_a.
$$
In particular,
$$
\lim_{z \to +\infty} \sup_{\mu \in \Lambda_a} \ang{(Z- z)^+,\mu} = 0,
$$
and, as a result,
\be \label{tight1}
\lim_{z \to +\infty} \sup_{\mu \in \Lambda_a}  \ang{Z 1_{\crl{Z > 2z}}, \mu} 
\le \lim_{z \to +\infty} \sup_{\mu \in \Lambda_a} \ang{2(Z- z)^+,\mu} = 0.
\ee
From (i) we know that 
\be \label{tight2}
\ang{Z,\mu} \le \varphi^{-1}(a) < + \infty \quad \mbox{for all }\mu \in \Lambda_a,
\ee
where $\varphi^{-1} : \mathbb{R} \to \mathbb{R}_+$ is the right-continuous inverse of $\varphi$, given by 
$$
\varphi^{-1}(y) := \sup \crl{x \in \mathbb{R}_+ : \varphi(x) \le y} \quad \mbox{with } \sup \emptyset := 0.
$$
The mapping $f : X \mapsto X/Z$ identifies $C_{Z}$ with the space of bounded 
continuous functions $C_b$, and $g : \mu \mapsto Z d\mu$ identifies $ca^+_{Z}$ with the set 
of all finite Borel measures $ca^+$. It follows from \eqref{tight1} and \eqref{tight2} that $g(\Lambda_a)$ is tight.
So one obtains from the direct half of Prokhorov's theorem (see e.g., Theorem 8.6.7 in \cite{Bog})
that $g(\Lambda_a)$ is $\sigma(ca^+, C_b)$-compact,
which is equivalent to $\Lambda_a$ being $\sigma(ca^+_{Z}, C_{Z})$-compact.
\end{proof}

\begin{lemma} \label{lemma:conv}
Let $\alpha : ca^+_Z \to \mathbb{R} \cup \crl{+\infty}$ be a mapping such that
$\inf_{\mu \in ca^+_Z} \alpha(\mu) \in \mathbb{R}$ and $\alpha(\mu) \ge \varphi(\ang{Z,\mu})$ 
for all $\mu \in ca^+_{Z}$ and a function $\varphi : \mathbb{R}_+ \to \mathbb{R} \cup \crl{+\infty}$ satisfying
$\lim_{x \to + \infty} \varphi(x)/x = +\infty$. Then
the following hold:
\begin{itemize}
\item[{\rm (i)}]
$\psi(X) := \sup_{\mu \in ca^+_Z} (\ang{X,\mu} - \alpha(\mu))$
defines an increasing convex functional $\psi : B_Z \to \mathbb{R}$.
\item[{\rm (ii)}]
If all sublevel sets $\crl{\mu \in ca^+_Z : \alpha(\mu) \le a}$, $a \in \mathbb{R}$, are relatively 
$\sigma(ca^+_Z, C_Z)$-compact, then $\psi(X_n) \downarrow \psi(X)$
for every sequence $(X_n)$ in $C_Z$ such that $X_n \downarrow X$ for an $X \in C_Z$.
\item[{\rm (iii)}]
If all sublevel sets $\crl{\mu \in ca^+_Z : \alpha(\mu) \le a}$, $a \in \mathbb{R}$, are 
$\sigma(ca^+_Z, C_Z)$-compact, then $\psi(X_n) \downarrow \psi(X)$
for every sequence $(X_n)$ in $C_Z$ such that $X_n \downarrow X$ for an $X \in U_Z$, and
$$
\psi(X) = \max_{\mu \in ca^+_Z} \brak{\ang{X,\mu} - \alpha(\mu)} \quad \mbox{for all } X \in U_Z.
$$
\end{itemize}
\end{lemma}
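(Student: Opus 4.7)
This is almost by inspection: $\psi$ is convex as a pointwise supremum of affine functionals $X\mapsto\langle X,\mu\rangle-\alpha(\mu)$, and increasing because every $\mu\in ca^+_Z$ is a positive measure. For real-valuedness on $B_Z$, the plan is: fix $\mu_0$ with $\alpha(\mu_0)<+\infty$ (available since $\inf_\mu\alpha(\mu)\in\mathbb R$) to obtain $\psi(X)\ge\langle X,\mu_0\rangle-\alpha(\mu_0)>-\infty$; for the upper bound pick $c$ with $|X|\le cZ$, then by $\alpha(\mu)\ge\varphi(\langle Z,\mu\rangle)$ and the assumption $\varphi(x)/x\to+\infty$,
\[
\langle X,\mu\rangle-\alpha(\mu)\le c\langle Z,\mu\rangle-\varphi(\langle Z,\mu\rangle)\le\sup_{x\ge 0}(cx-\varphi(x))<+\infty
\]
uniformly in $\mu$.

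\textbf{Part (ii).} Fix $X_n\downarrow X$ in $C_Z$. Monotonicity gives $\psi(X_n)\downarrow L\ge\psi(X)$, so the task reduces to $L\le\psi(X)$. I would choose $\mu_n$ that is $1/n$-optimal for $\psi(X_n)$. Using $X_n\le X_1\le cZ$ and $\alpha(\mu_n)\ge\varphi(\langle Z,\mu_n\rangle)$, the superlinearity of $\varphi$ keeps $\langle Z,\mu_n\rangle$ and thus $\alpha(\mu_n)$ bounded, so $(\mu_n)$ lies in some relatively $\sigma(ca^+_Z,C_Z)$-compact sublevel set. The key move is the decomposition
\[
\psi(X_n)-1/n\le\langle X_n-X,\mu_n\rangle+\langle X,\mu_n\rangle-\alpha(\mu_n)\le\langle X_n-X,\mu_n\rangle+\psi(X),
\]
which reduces the problem to $\langle X_n-X,\mu_n\rangle\to 0$. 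For any subnet of $(\mu_n)$, relative compactness extracts a further convergent subnet $\mu_{n_\beta}\to\mu$; for each fixed $m$, $X_m-X\in C_Z$ yields $\langle X_m-X,\mu_{n_\beta}\rangle\to\langle X_m-X,\mu\rangle$, and the domination $X_{n_\beta}-X\le X_m-X$ (valid once $n_\beta\ge m$) combined with dominated convergence as $m\to\infty$ (since $|X_m-X|$ is bounded by a multiple of $Z\in L^1(\mu)$) gives $\limsup_\beta\langle X_{n_\beta}-X,\mu_{n_\beta}\rangle\le 0$; positivity completes the argument.

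\textbf{Part (iii).} For $X\in U_Z$ the sublevel sets are now $\sigma(ca^+_Z,C_Z)$-compact; since this topology is Hausdorff (bounded continuous functions already separate Borel measures on the metric space $\Omega$), compact sublevel sets are closed, so $\alpha$ is $\sigma(ca^+_Z,C_Z)$-lower semicontinuous and every subnet limit $\mu$ of the $\mu_n$ satisfies $\alpha(\mu)<+\infty$. Since $X_n-X$ is only lower semicontinuous, the decomposition from (ii) no longer passes to the limit; instead, for $n\ge m$ I would write
\[
\psi(X_n)-1/n\le\langle X_m,\mu_n\rangle-\alpha(\mu_n),
\]
and pass to the subnet using that $\mu\mapsto\langle X_m,\mu\rangle-\alpha(\mu)$ is upper semicontinuous (continuous minus l.s.c.) to obtain $L\le\langle X_m,\mu\rangle-\alpha(\mu)$. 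Letting $m\to\infty$ and invoking dominated convergence (via $|X_m|\le X_1+|X|$, bounded by a multiple of $Z\in L^1(\mu)$) yields $L\le\langle X,\mu\rangle-\alpha(\mu)\le\psi(X)$. Applying the continuity just proved to the explicit Lipschitz regularization $X_n\downarrow X$ from the paragraph preceding the lemma, the same $\mu$ in fact attains the supremum, giving the max representation.

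\textbf{Main obstacle.} The delicate point is (ii): with only relative compactness and no a priori lower semicontinuity of $\alpha$, one cannot pass to the limit inside $-\alpha(\mu_n)$ directly. The trick is the decomposition that absorbs $\langle X,\mu_n\rangle-\alpha(\mu_n)$ into $\psi(X)$, leaving the remaining term $\langle X_n-X,\mu_n\rangle$ to be handled via continuity of $X_n-X$. In (iii) that continuity is lost (since $X\in U_Z$), which is precisely why the stronger compactness is assumed: it upgrades $\alpha$ to l.s.c., allowing an upper-semicontinuity limit argument to replace the decomposition.
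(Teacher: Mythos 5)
Your proposal is correct. Parts (i) and (iii) follow essentially the paper's own route: the near-optimizers $\mu_n$ are trapped in a sublevel set of $\alpha$ via the superlinearity of $\varphi$, a convergent subnet is extracted (the paper instead uses metrizability of $\sigma(ca^+_Z,C_Z)$, via the identification with $(ca^+,C_b)$ and the Kantorovich--Rubinstein norm, to work with subsequences), lower semicontinuity of $\alpha$ controls $-\alpha(\mu_{n_\beta})$, and the monotonicity $X_n\le X_m$ together with $\ang{X_m,\mu}\downarrow\ang{X,\mu}$ controls $\ang{X_n,\mu_n}$ --- the paper phrases this last step as an $\varepsilon$-argument ($\ang{X_n,\mu_n}\le\ang{X_{n'},\mu_n}\le\ang{X_{n'},\mu}+\varepsilon\le\ang{X,\mu}+2\varepsilon$) rather than through upper semicontinuity of $\nu\mapsto\ang{X_m,\nu}-\alpha(\nu)$, but the substance is identical. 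Where you genuinely diverge is (ii): the paper deduces it from (iii) by passing to the $\sigma(ca^+_Z,C_Z)$-lower semicontinuous hull $\alpha_*$, checking that $\alpha_*$ has compact sublevel sets, still dominates a superlinear function of $\ang{Z,\mu}$, and induces the same $\psi$ on $C_Z$; you instead prove (ii) directly with the decomposition $\ang{X_n,\mu_n}-\alpha(\mu_n)=\ang{X_n-X,\mu_n}+\bigl(\ang{X,\mu_n}-\alpha(\mu_n)\bigr)\le\ang{X_n-X,\mu_n}+\psi(X)$, which absorbs the problematic $-\alpha(\mu_n)$ into $\psi(X)$ and reduces everything to $\ang{X_n-X,\mu_n}\to 0$, handled by relative compactness and continuity of $X_m-X$. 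Your route is self-contained and never needs semicontinuity of $\alpha$ in (ii); the paper's is shorter on the page given (iii) but leaves the (routine) verification of the hull's properties to the reader. One small point worth making explicit in your part (i): $\sup_{x\ge 0}(cx-\varphi(x))$ need not be finite if $\varphi$ is unbounded below on compacts, so the upper bound should combine $\alpha(\mu)\ge\varphi(\ang{Z,\mu})$ with $\alpha(\mu)\ge\inf_\nu\alpha(\nu)$; the paper handles this later by replacing $\varphi$ with $\tilde{\varphi}(x)=\inf_{y\ge x}\varphi(y)\vee\inf_{\nu}\alpha(\nu)$.
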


\begin{proof}
It is clear that $\psi$ is increasing and convex. Moreover, for every $X \in B_Z$, there exists an
$m \in \mathbb{R}_+$ such that $|X| \le mZ$. Therefore,
$$
\psi(X) \ge \sup_{\mu \in ca^+_Z} \brak{-m \ang{Z,\mu} - \alpha(\mu)} > - \infty
$$
as well as
$$
\psi(X) \le \sup_{\mu \in ca^+_Z} \brak{m \ang{Z,\mu} - \alpha(\mu)}
\le \sup_{\mu \in ca^+_Z} \brak{m \ang{Z,\mu} - \varphi(\ang{Z,\mu})} < + \infty.
$$
This shows (i).

Now, let $(X_n)$ be a sequence in $C_Z$ such that $X_n \downarrow X$ for some $X \in U_Z$.
By replacing $\varphi$ with 
$$
\tilde{\varphi}(x) = \inf_{y \ge x} \varphi(y) \vee \inf_{\mu \in ca^+_Z} \alpha(\mu),
$$
one can assume that $\varphi$ is increasing. Then the right-continuous inverse 
$\varphi^{-1} : \mathbb{R} \to \mathbb{R}_+$, given by
$$
\varphi^{-1}(y) := \sup \crl{x \in \mathbb{R}_+ : \varphi(x) \le y} \quad \mbox{with } \sup \emptyset := 0,
$$
satisfies $\lim_{y \to +\infty} \varphi^{-1}(y)/y = 0$ and $\ang{Z,\mu} \le \varphi^{-1}(\alpha(\mu))$ 
for all $$
\mu \in {\rm dom}\, \alpha := \crl{\nu \in ca^+_{Z} : \alpha(\nu) < + \infty}.
$$
Choose $m \in \mathbb{R}_+$ such that $X_1 \le m Z$. Then
\be \label{alarge}
\ang{X_n,\mu} - \alpha(\mu) \le m \ang{Z, \mu} - \alpha(\mu)
\le m \varphi^{-1}(\alpha(\mu)) - \alpha(\mu) \quad
\mbox{for all } n \mbox{ and } \mu \in {\rm dom} \, \alpha.
\ee
If the sets $\crl{\mu \in ca^+_{Z} : \alpha(\mu) \le a}$, $a \in \mathbb{R}$, are 
$\sigma(ca^+_Z, C_Z)$-compact, then $\alpha$ is $\sigma(ca^+_Z, C_Z)$-lower semicontinuous.
Furthermore, it follows from \eqref{alarge} that
for $a \in \mathbb{R}$ large enough, there exists a sequence $(\mu_n)$ in 
$\crl{\mu \in ca^+_{Z} : \alpha(\mu) \le a}$ such that
$$
\psi(X_n) \le \ang{X_n, \mu_n} - \alpha(\mu_n) + \frac{1}{n} \quad \mbox{for all } n.
$$
As shown in the proof of Lemma \ref{lemma:comp}, the pair $(ca^+_Z,C_Z)$ can be identified with $(ca^+,C_b)$, and 
$\sigma(ca^+,C_b)$ is generated by the Kantorovich--Rubinstein norm (see e.g. Theorem 8.3.2 in \cite{Bog}).
So $\sigma(ca^+_Z, C_Z)$ is metrizable. Therefore, after possibly passing to a subsequence, one can 
assume that $(\mu_n)$ converges to a measure $\mu \in \crl{\mu \in ca^+_{Z} : \alpha(\mu) \le a}$ in $\sigma(ca^+_{Z}, C_{Z})$. 
Then 
$$
\alpha(\mu) \le \liminf_n \alpha(\mu_n).
$$
Moreover, for every $\varepsilon > 0$, there is an $n'$ such that 
$\ang{X_{n'}, \mu} \le \ang{X,\mu} + \varepsilon$.
Now choose $n \ge n'$ such that $\ang{X_{n'}, \mu_n} \le \ang{X_{n'}, \mu} + \varepsilon$.
Then 
$$
\ang{X_n, \mu_n} \le \ang{X_{n'},\mu_n} \le \ang{X_{n'}, \mu} + \varepsilon \le \ang{X,\mu} + 2 \varepsilon,
$$
showing that, $\limsup_n \ang{X_n,\mu_n} \le \ang{X,\mu}$, and therefore,
$$
\inf_n \psi(X_n) \le \limsup_n \brak{\ang{X_n,\mu_n} - \alpha(\mu_n)} \le \ang{X,\mu} - \alpha(\mu) \le \psi(X).
$$
In particular, 
$$\psi(X_n) \downarrow \psi(X) = \max_{\mu \in ca^+_Z} \brak{\ang{X,\mu} - \alpha(\mu)},
$$
which shows (iii).

It remains to prove (ii). To do that we note that if $\alpha$ satisfies the assumption of (ii), the $\sigma(ca^+_Z, C_Z)$-lower
semicontinuous hull $\alpha_*$ has $\sigma(ca^+_Z, C_Z)$-compact sublevel sets and
$$\alpha_*(\mu) \ge \varphi_*(\ang{Z,\mu}) \vee \inf_{\mu \in ca^+_Z} \alpha(\mu) \quad \mbox{for all }
\mu \in ca^+_Z,$$ where $\varphi_*$ is the lower semicontinuous hull of $\varphi$. Since 
$\lim_{x \to + \infty} \varphi_*(x)/x = + \infty$ and
$$
\psi(X) = \sup_{\mu \in ca^+_Z} \brak{\ang{X,\mu} - \alpha_*(\mu)} \quad \mbox{for all } X \in C_Z,
$$
it follows from (iii) that $\psi(X_n) \downarrow \psi(X)$ for every sequence 
$(X_n)$ in $C_Z$ such that $X_n \downarrow X$ for an $X \in C_Z$.
This shows (ii).
\end{proof}

Now we are ready to give our representation result for increasing convex functionals on $U_Z$.
For $\mu \in ca^+_Z$, we define
$$
\psi^*_{U_Z}(\mu) := \sup_{X \in U_Z} (\ang{X,\mu} - \phi(X)).
$$

\begin{theorem} \label{thm:repusc}
Let $\psi : U_{Z} \to \mathbb{R}$ be an increasing convex functional satisfying condition
\eqref{everycall}. Then the following are equivalent:

\begin{itemize}
\item[{\rm (i)}] 
$\psi(X) = \max_{\mu \in ca^+_{Z}} (\ang{X,\mu} - \psi^*_{C_{Z}}(\mu))$ for all $X \in U_{Z}$
\item[{\rm (ii)}] 
$\psi(X_n) \downarrow \psi(X)$ for all $X \in U_{Z}$ and every sequence $(X_n)$ in 
$C_{Z}$ such that $X_n \downarrow X$
\item[{\rm (iii)}]
$\psi(X) = \inf_{Y \in C_{Z}, \, Y \ge X} \psi(Y)$ for all $X \in U_{Z}$
\item[{\rm (iv)}] $\psi^*_{C_{Z}}(\mu) = \psi^*_{U_{Z}}(\mu)$ for all $\mu \in ca^+_{Z}$.
\end{itemize}
\end{theorem}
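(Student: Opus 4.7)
I will prove the equivalence cyclically: (i)$\Rightarrow$(ii)$\Rightarrow$(iii)$\Rightarrow$(iv)$\Rightarrow$(i). The driving idea is that under \eqref{everycall}, Lemma \ref{lemma:cond} gives \eqref{condepsilon}, so Theorem \ref{thm:rep} already yields the representation $\psi(X)=\max_{\mu}(\ang{X,\mu}-\psi^\ast_{C_Z}(\mu))$ on all of $C_Z$. The question is only whether this representation extends from $C_Z$ to $U_Z$; all four conditions are different ways of saying that it does.

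For (i)$\Rightarrow$(ii), I set $\alpha=\psi^\ast_{C_Z}$. By Lemma \ref{lemma:comp}(i), $\alpha$ dominates a superlinear function of $\ang{Z,\mu}$ and is bounded below (take $X=0$ in the definition of $\psi^\ast_{C_Z}$ to get $\alpha(\mu)\ge -\psi(0)$), so the hypotheses of Lemma \ref{lemma:conv} are met; by Lemma \ref{lemma:comp}(ii), the sublevel sets of $\alpha$ are $\sigma(ca^+_Z,C_Z)$-compact, hence Lemma \ref{lemma:conv}(iii) applies. Writing $\tilde\psi(X):=\sup_{\mu}(\ang{X,\mu}-\alpha(\mu))$, Theorem \ref{thm:rep} gives $\tilde\psi=\psi$ on $C_Z$ and assumption (i) gives $\tilde\psi=\psi$ on $U_Z$; Lemma \ref{lemma:conv}(iii) then delivers $\psi(X_n)=\tilde\psi(X_n)\downarrow\tilde\psi(X)=\psi(X)$ whenever $X_n\downarrow X$ with $X_n\in C_Z$ and $X\in U_Z$.

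For (ii)$\Rightarrow$(iii), recall from the construction before Lemma \ref{lemma:comp} that every $X\in U_Z$ admits a sequence $(X_n)\subseteq C_Z$ with $X_n\downarrow X$. Then by (ii) and monotonicity,
\[
\psi(X)=\lim_n\psi(X_n)\ge \inf_{Y\in C_Z,\ Y\ge X}\psi(Y)\ge \psi(X).
\]
For (iii)$\Rightarrow$(iv), the inequality $\psi^\ast_{C_Z}(\mu)\le \psi^\ast_{U_Z}(\mu)$ is immediate from $C_Z\subseteq U_Z$. Conversely, fix $X\in U_Z$ and $\mu\in ca^+_Z$. For any $\varepsilon>0$, (iii) produces $Y\in C_Z$ with $Y\ge X$ and $\psi(Y)\le \psi(X)+\varepsilon$; since $\mu$ is a positive measure, $\ang{Y,\mu}\ge\ang{X,\mu}$, so
\[
\ang{X,\mu}-\psi(X)\le \ang{Y,\mu}-\psi(Y)+\varepsilon\le \psi^\ast_{C_Z}(\mu)+\varepsilon,
\]
and letting $\varepsilon\downarrow 0$ and taking the supremum over $X\in U_Z$ gives $\psi^\ast_{U_Z}(\mu)\le \psi^\ast_{C_Z}(\mu)$.

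For (iv)$\Rightarrow$(i), define again $\tilde\psi(X):=\sup_{\mu\in ca^+_Z}(\ang{X,\mu}-\psi^\ast_{C_Z}(\mu))$, which by Theorem \ref{thm:rep} and Lemma \ref{lemma:cond} coincides with $\psi$ on $C_Z$, and by Lemma \ref{lemma:conv}(iii) is attained as a maximum for every $X\in U_Z$. For $X\in U_Z$, the Fenchel-Young inequality together with (iv) gives
\[
\psi(X)\ge \ang{X,\mu}-\psi^\ast_{U_Z}(\mu)=\ang{X,\mu}-\psi^\ast_{C_Z}(\mu)\quad\text{for all }\mu\in ca^+_Z,
\]
so $\psi(X)\ge \tilde\psi(X)$. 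Conversely, choosing $X_n\in C_Z$ with $X_n\downarrow X$ and using monotonicity of $\psi$ together with Lemma \ref{lemma:conv}(iii) for $\tilde\psi$,
\[
\psi(X)\le \inf_n \psi(X_n)=\inf_n \tilde\psi(X_n)=\tilde\psi(X),
\]
yielding equality and hence (i).

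The main obstacle is the (iv)$\Rightarrow$(i) step, since at that stage one has only the equality of conjugates and must bootstrap back to a primal representation on $U_Z$; this is exactly where the compactness of the sublevel sets of $\psi^\ast_{C_Z}$ from Lemma \ref{lemma:comp}(ii) and the monotone convergence property from Lemma \ref{lemma:conv}(iii) are needed to link the values of $\tilde\psi$ on $U_Z$ back to those of $\psi$.
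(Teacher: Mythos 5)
Your proof is correct and follows essentially the same route as the paper: the same cyclic scheme (i)$\Rightarrow$(ii)$\Rightarrow$(iii)$\Rightarrow$(iv)$\Rightarrow$(i), driven by Theorem \ref{thm:rep} on $C_Z$ via Lemma \ref{lemma:cond}, the compactness and growth bounds of Lemma \ref{lemma:comp}, and the monotone-continuity statement of Lemma \ref{lemma:conv}(iii) applied to $\tilde\psi(X)=\sup_{\mu}(\ang{X,\mu}-\psi^\ast_{C_Z}(\mu))$. The only cosmetic remark is that for the hypothesis $\inf_{\mu}\psi^\ast_{C_Z}(\mu)\in\mathbb{R}$ of Lemma \ref{lemma:conv} you verify the lower bound but not finiteness, which follows immediately from the attainment in Theorem \ref{thm:rep} at $X=0$.
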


\begin{proof}
Since, by Lemma \ref{lemma:cond}, \eqref{everycall} implies \eqref{condepsilon}, we obtain from 
Theorem \ref{thm:rep} that 
$$
\psi(X) = \max_{\mu \in ca^+_Z} \brak{\ang{X,\mu} - \psi^*_{C_Z}(\mu)} \quad \mbox{for all } X \in C_Z.
$$
Moreover, by Lemma \ref{lemma:comp}, the sublevel sets
$\{\mu \in ca^+_{Z} : \psi^*_{C_{Z}}(\mu) \le a\}$, $a \in \mathbb{R}$, are $\sigma(ca^+_{Z}, C_{Z})$-compact, 
and there exists a function $\varphi : \mathbb{R}_+ \to \mathbb{R} \cup \crl{+\infty}$ such that
$\lim_{x \to +\infty} \varphi(x)/x = +\infty$ and $\psi^*_{C_{Z}}(\mu) \ge \varphi(\ang{Z,\mu})$
for all $\mu \in ca^+_{Z}$. 

(i) $\Rightarrow $ (ii) is now a consequence of part (iii) of Lemma \ref{lemma:conv}, and 
(ii) $\Rightarrow $ (iii) follows since for every $X \in U_Z$ there exists 
a sequence $(X_n)$ in $C_Z$ such that $X_n \downarrow X$.

(iii) $\Rightarrow$ (iv): One obviously has $\psi^*_{U_{Z}} \ge \psi^*_{C_{Z}}$. On the other hand, 
if for every $X \in U_{Z}$, there is a sequence $(X_n)$ in $C_{Z}$ such 
that $X_n \ge X$ and $\psi(X_n) \downarrow \psi(X)$, then
$$
\sup_n (\ang{X_n, \mu} - \psi(X_n)) \ge \ang{X,\mu} - \psi(X),
$$
from which one obtains $\psi^*_{C_{Z}} \ge \psi^*_{U_{Z}} $.

(iv) $\Rightarrow$ (i):
For given $X \in U_Z$, one obtains from the definition of $\psi^*_{U_{Z}}$ that
$$
\psi(X) \ge \sup_{\mu \in ca^+_{Z}} (\ang{X,\mu} - \psi^*_{U_{Z}}(\mu)) = 
\sup_{\mu \in ca^+_{Z}} (\ang{X,\mu} - \psi^*_{C_{Z}}(\mu)).
$$
Conversely, since there exists a sequence $(X_n)$ in $C_Z$ such that 
$X_n \downarrow X$, we can conclude by (iii) of Lemma \ref{lemma:conv} that
$$
\psi(X) \le \inf_n \psi(X_n) = \max_{\mu \in ca^+_{Z}} (\ang{X,\mu} - \psi^*_{C_{Z}}(\mu)).
$$
\end{proof}

\subsection{Proof of Theorem \ref{thm:main}}

Having Theorems \ref{thm:rep} and \ref{thm:repusc} in hand, we now can prove Theorem \ref{thm:main}.
The implications (iii) $\Rightarrow$ (ii) $\Rightarrow$ (i) as well as (vi) 
$\Rightarrow$ (v) $\Rightarrow$ (iv) $\Rightarrow$ (i) hold without assumption \eqref{condcall}. Indeed, 
(iv) $\Rightarrow$ (i) is obvious.

(iii) $\Rightarrow$ (ii): It follows from (iii) that
$$
\min_{\p \in {\cal P}_Z} \phi^*(\p) = - \max_{\p \in {\cal P}_Z} - \phi^*(\p) = - \phi(0) = 0.
$$
In particular, since $\phi(X) \le 0$ for all $X \in G-A$, there exists a $\p \in {\cal P}_Z$ such that
$$
0 = \phi^*(\p) = \sup_{X \in C_Z}  (\EP X - \phi(X)) \ge \sup_{X \in C_Z \cap (G-A)} (\EP X - \phi(X)) 
\ge \sup_{X \in C_Z \cap (G-A)} \EP X,
$$
and therefore, $\EP X \le 0$ for all $X \in C_Z \cap (G-A)$.

(ii) $\Rightarrow$ (i): Assume there exists an $X \in G-A$ such that $X \ge \varepsilon$ for 
some $\varepsilon \in \mathbb{R}_{++}$. By assumption, $A + B^+ \subseteq A$. So it follows that
$\varepsilon$ belongs to $C_Z \cap (G-A)$. But since $\EP \varepsilon = \varepsilon > 0$ for all 
$\p \in {\cal P}_Z$, this contradicts (ii).

(vi) $\Rightarrow$ (v): If (vi) holds, then $\phi^*(\p) \ge \sup_{X \in U_Z} (\EP X - \phi(X))$ for all 
$\p \in {\cal P}_Z$. So (v) follows from (vi) like (ii) from (iii).

(v) $\Rightarrow$ (iv): Assume there exists an $X \in G-A$ such that $X(\omega)>0$ for all $\omega \in \Omega$ and 
fix a $\p \in {\cal P}_Z$. Since $\p$ is a Borel measure on a measurable subset of $\mathbb{R}^{(J+1)T}$, it is regular.
So there exist an $\varepsilon >0$ and a closed set $F \subseteq \crl{X \ge \varepsilon}$ such that 
$\p[F] > 0$, or equivalently, $\EP 1_F > 0$. But since $\varepsilon 1_F$ belongs to 
$U_Z \cap (G-A)$, this contradicts (v).

Now, we assume \eqref{condcall} and show (i) $\Rightarrow$ (iii).
Since $A + B^+ \subseteq A$ and $A-G$ is convex, the mapping $\phi : B_Z \to \mathbb{R} \cup \crl{\pm \infty}$ 
is increasing and convex. Moreover, it follows from (i) that $\phi(m) = m$ for all $m \in \mathbb{R}$,
implying that $\phi(X) \in \mathbb{R}$ for every bounded $X \in B$. 
By condition \eqref{condcall}, there exists for every $n \in \mathbb{N}$ a
$z \in \mathbb{R}_+$ such that $\phi(n(Z-z)^+) \le 1/n$, and therefore,
$$
\phi \brak{\frac{n}{2}Z} \le \frac{\phi(nz) + \phi(n(Z-z)^+)}{2} \le \frac{nz}{2} + \frac{1}{2n}.
$$
Now, one obtains from monotonicity and convexity that $\phi$ is real-valued on $B_Z$. In addition, it follows 
from \eqref{condcall} that $\phi$ satisfies \eqref{everycall}, and so by
Lemma \ref{lemma:cond}, also \eqref{condepsilon}. Therefore, Theorem \ref{thm:rep}
yields $\phi(X) = \max_{\mu \in ca^+_Z} (\ang{X,\mu} - \phi^*_{C_Z}(\mu))$ for all $X \in C_Z$.
Since $\phi(m) = m$ for all $m \in \mathbb{R}$, $\phi^*_{C_Z}(\mu)$ is $+ \infty$ for all $\mu \in ca^+_Z \setminus {\cal P}_Z$,
and one obtains $\phi(X) = \max_{\p \in {\cal P}_Z} (\EP X - \phi^*(\p))$ 
for all $X \in C_Z$.

Finally, if conditions \eqref{condcall}--\eqref{condusc} and (iii) of Theorem \ref{thm:main} hold,
$\phi$ is a real-valued increasing convex functional on $B_Z$ fulfilling \eqref{everycall} as well as
condition (iii) of Theorem \ref{thm:repusc}. So Theorem \ref{thm:repusc} yields 
that $\phi$ also satisfies condition (i) of Theorem \ref{thm:repusc}, and therefore,
(vi) of Theorem \ref{thm:main}, which completes the proof.
\qed

\subsection{Proofs of Propositions \ref{prop:A} and \ref{prop:dualcond}}

{\bf Proof of Proposition \ref{prop:A}}\\
If the acceptance set $A$ is of the form \eqref{acc} for a mapping $\alpha : {\cal P}_Z \to \mathbb{R}_+ \cup \crl{+\infty}$
satisfying (A1)--(A2), it can be written as
$$
A = \crl{X \in B_Z : \rho(X) \le 0} + B^+ \quad \mbox{for} \quad \rho(X) := \sup_{\p \in {\cal P}_Z} \brak{\EP(-X) - \alpha(\p)}.
$$
By passing to the lower convex hull, one can assume that $\beta$ is convex. 
Then, Jensen's inequality yields
$$
\alpha(\p) \ge \EP \beta(Z) \ge \beta(\EP Z) \quad \mbox{for all } \p \in {\cal P}_Z.
$$
By identifying $(C_Z, ca^+_Z)$ with $(C_b, ca^+)$ like in the proof of Lemma \ref{lemma:comp}, one
deduces from Prokhorov's theorem that the sets $\{\p \in {\cal P}_Z : \EP \beta(Z) \le a\}$, $a \in \mathbb{R}$, are 
$\sigma({\cal P}_Z, C_Z)$-compact. As a consequence, the sets $\{\p \in {\cal P}_Z : \alpha(\p) \le a\}$, $a \in \mathbb{R}$, are
relatively $\sigma({\cal P}_Z, C_Z)$-compact, and one obtains from part (ii) of Lemma \ref{lemma:conv} that for every
$n \in \mathbb{N}$,
$$
\rho \brak{\frac{1}{n} - n(Z-z)^+} = - \frac{1}{n} + \rho \brak{- n(Z-z)^+} \downarrow - \frac{1}{n}
\quad \mbox{for } z \to + \infty.
$$
In particular, $1/n - n(Z-z)^+ \in A \subseteq A-G$ for $z$ large enough, 
showing that condition \eqref{condcall} holds. \qed

\bigskip \noindent
{\bf Proof of Proposition \ref{prop:dualcond}}\\
By our assumptions on $G$ and $A$, one has $\phi(0) \le 0$. Moreover, if $\phi(0) = - \infty$, $G-A$
contains $\mathbb{R}$, which by \eqref{condcall}, implies $G-A = B_Z$. Then 
$\phi \equiv - \infty$ on $B_Z$, $\phi^* \equiv + \infty$, and all the statements of Proposition \ref{prop:dualcond} 
become obvious. On the other hand, if $\phi(0) > - \infty$, it follows from \eqref{condcall}, like in the proof 
of Theorem \ref{thm:main}, that $\phi$ is real-valued on $B_Z$. Then
$$
\phi^*(\p) \le \sup_{X \in U_Z} (\EP X - \phi(X)) \quad \mbox{for all } \p \in {\cal P}_Z,
$$
and since it follows from \eqref{condcall} that $\phi$ satisfies \eqref{everycall},
Theorem \ref{thm:repusc} yields that the inequality is an equality if and only if 
$\phi$ satisfies condition \eqref{condusc}. Next, note that since
$$
X - \phi(X) - \varepsilon \in C_Z \cap (G-A) \quad \mbox{for all } X \in C_Z \mbox{ and } \varepsilon > 0,
$$
one has
$$
\phi^*(\p) = \sup_{X \in C_Z} \EP(X-\phi(X)) = \sup_{X \in C_Z \cap (G-A)} \EP X,
$$
and analogously,
$$
\sup_{X \in U_Z} (\EP X-\phi(X)) = \sup_{X \in U_Z \cap (G-A)} \EP X.
$$
This completes the proof of Proposition \ref{prop:dualcond}.
\qed

\subsection{Proofs of Propositions \ref{prop:trans} and \ref{prop:short}}

{\bf Proof of Proposition \ref{prop:trans}}\\
We start by showing the implications (iv) $\Rightarrow$ (iii) $\Rightarrow$ (ii) $\Rightarrow$ (i).
If (iv) holds, one has $0 = \phi(0) = \max_{\p \in {\cal P}_Z} - \phi^*_G(\p)$, yielding (iii).
Moreover, since $\EP X \le 0$ for every $X \in G$ and $\p \in {\cal M}$, (iii) implies (ii).
That (ii) implies (i) is obvious. 

To prove (i) $\Rightarrow$ (iv), we first note that \eqref{sscondition} implies 
\eqref{condcall} and (i) is a reformulation of condition (i) 
of Theorem \ref{thm:main} in the case $A = B^+$. So, by Theorem \ref{thm:main}, it follows from
(i) that $\phi$ is real-valued on $B_Z$ with $\phi(0) = 0$. Moreover, we know from Proposition \ref{prop:dualcond} that
$$
\phi^*(\p) \le \sup_{X \in U_Z \cap (G-B^+)} \EP
\le \sup_{X \in G} \EP X = \phi^*_G(\p), \quad \p \in {\cal P}_Z.
$$
So if we can show that
\be \label{G*}
\phi^*(\p) \ge \phi_G^*(\p) \quad \mbox{for all } \p \in {\cal P}_Z,
\ee
we obtain $\phi^* = \phi^*_G$, and by Proposition \ref{prop:dualcond}, condition \eqref{condusc} holds.
Then it follows from Theorem \ref{thm:main} that (i) implies (iv). To prove \eqref{G*}, we observe that
\[
\sup_{X \in G} \EP X 
= \sup_{\vartheta} \EP \Big[\sum_{t,j \ge 1} \vartheta^j_t \Delta \tilde{S}^j_t -
\frac{g^j_{t-1}(\Delta\vartheta^j_{t} S^j_{t-1})}{S^0_{t-1}}\Big]
+ \sup_\theta \Big( \sum_i \theta_i \EP H_i- h(\theta)\Big),
\]
and since $g^j_{t-1}(0) = 0$, the first supremum can be taken over strategies $\vartheta$ such that 
$$
\EP \Big[\sum_{t,j \ge 1} \vartheta^j_t \Delta \tilde{S}^j_t - \frac{g^j_{t-1}(\Delta\vartheta^j_{t} S^j_{t-1})}{S^0_{t-1}}\Big] \ge 0.
$$
Therefore, $$\EP \edg{\sum_{t,j \ge 1} \vartheta^j_t \Delta \tilde{S}^j_t -g^j_{t-1}(\Delta \vartheta^j_{t} S^j_{t-1})/S^0_{t-1}}$$ can 
be approximated by $$\EP \edg{\sum_{t,j \ge 1} \tilde{\vartheta}^j_t \Delta \tilde{S}^j_t -g^j_{t-1}
(\Delta \tilde{\vartheta}^j_{t} S^j_{t-1})/S^0_{t-1}}$$
for continuous ${\cal F}_{t-1}$-measurable functions $\tilde \vartheta^j_t : \Omega \to \mathbb{R}$
with compact support. But since
$$
\sum_{t,j \ge 1} \brak{\tilde{\vartheta}^j_t \Delta S_t^j- \frac{g^j_{t-1}(\Delta \tilde{\vartheta}^j_t S^j_{t-1})}{S^0_{t-1}}}
+ \sum_i \theta_i H_i- h(\theta) 
$$
is in $C_Z \cap G$, it follows that $\phi^*(\p) \ge \sup_{X \in C_Z \cap G} \EP X \ge \phi^*_G(\p)$ 
for all $\p \in {\cal P}_Z$.

It remains to show that $\phi^*_G$ is of the form \eqref{penaltyG}. To do this we note that
$$
\sum_{t=1}^T \vartheta^j_t\Delta \tilde{S}^j_t=\sum_{t=1}^T \sum_{s=1}^t \Delta \vartheta^j_s\Delta \tilde{S}^j_t=
\sum_{s=1}^T \sum_{t=s}^T \Delta \vartheta^j_s \Delta \tilde{S}^j_t=\sum_{t=1}^T\Delta\vartheta^j_t (\tilde{S}^j_T-\tilde{S}_{t-1}^j).
$$
Hence,
\[\sup_{X \in G} \EP X 
= \sup_{\vartheta} \EP \edg{\sum_{t,j \ge 1}\Delta\vartheta^j_t  (\tilde{S}^j_T- \tilde{S}_{t-1}^j)- \frac{g^j_{t-1}
(\Delta\vartheta^j_{t} S^j_{t-1})}{S^0_{t-1}}}
+ \sup_\theta \Big( \sum_i \theta_i \EP H_i- h(\theta)\Big),\]
where the first supremum can be taken over strategies $\vartheta$ such that 
$$
\EP \Big[\sum_{t,j \ge 1}\Delta\vartheta^j_t  (\tilde{S}^j_T- \tilde{S}_{t-1}^j)-
\frac{g^j_{t-1}(\Delta\vartheta^j_{t} S^j_{t-1})}{S^0_{t-1}} \Big] \ge 0.
$$
Now $\EP \edg{\sum_{t,j \ge 1}\Delta\vartheta^j_t  (\tilde{S}^j_T- \tilde{S}_{t-1}^j)-
g^j_{t-1}(\Delta\vartheta^j_{t} S^j_{t-1})/S^0_{t-1}}$ can be approximated by 
\beas
&& \EP \edg{\sum_{t,j \ge 1} \Delta \tilde{\vartheta}^j_t  (\tilde{S}^j_T- \tilde{S}_{t-1}^j) -
\frac{g^j_{t-1} (\Delta\tilde{\vartheta}^j_{t} S^j_{t-1})}{S^0_{t-1}}}\\ 
&=& \EP \edg{\sum_{t,j \ge 1} \Delta \tilde{\vartheta}^j_t (\EP [\tilde{S}^j_T \mid {\cal F}_{t-1}] - \tilde{S}_{t-1}^j)
- \frac{g^j_{t-1}(\Delta \tilde{\vartheta}^j_{t} S^j_{t-1})}{S^0_{t-1}}}
\eeas
for bounded ${\cal F}_{t-1}$-measurable mappings $\Delta \tilde{\vartheta}^j_t$ with compact support.
On $\{S^j_{t-1} > 0\}$, one has
\beas
&& \sup_{\Delta \tilde{\vartheta}^j_t} \brak{\Delta \tilde{\vartheta}^j_t (\EP[\tilde{S}^j_T\mid{\cal F}_{t-1}]-\tilde{S}_{t-1}^j)
- \frac{g^j_{t-1}(\Delta \tilde{\vartheta}^j_{t} S^j_{t-1})}{S^0_{t-1}}}\\
&=& \frac{1}{S^0_{t-1}} \sup_{\Delta \tilde{\vartheta}^j_t} \Bigg(\Delta \tilde{\vartheta}^j_t S^j_{t-1}
\frac{\EP[\tilde{S}^j_T \mid{\cal F}_{t-1}] - \tilde{S}_{t-1}^j}{\tilde{S}^j_{t-1}}- g^j_{t-1}(\Delta \tilde{\vartheta}^j_{t} S^j_{t-1})\Bigg)\\
&=& \frac{1}{S^0_{t-1}} g^{j*}_{t-1}\Bigg(\frac{\EP[\tilde{S}^j_T\mid{\cal F}_{t-1}]-\tilde{S}_{t-1}^j}{\tilde{S}^j_{t-1}}\Bigg),
\eeas
and on $\{S^j_{t-1} = 0\}$, 
\beas
&& \sup_{\Delta \tilde{\vartheta}^j_t} \brak{\Delta \tilde{\vartheta}^j_t (\EP[\tilde{S}^j_T\mid{\cal F}_{t-1}]- \tilde{S}_{t-1}^j)
- \frac{g^j_{t-1}(\Delta \tilde{\vartheta}^j_{t} S^j_{t-1})}{S^0_{t-1}}}\\
&=& \sup_{\Delta \tilde{\vartheta}^j_t} \Delta \tilde{\vartheta}^j_t \EP[\tilde{S}^j_T\mid{\cal F}_{t-1}]
=+\infty 1_{\{\EP[\tilde{S}^j_T\mid{\cal F}_{t-1}] > 0\}}.
\eeas
Since $\p[S^j_{t-1} = 0 \mbox{ and } \EP[\tilde{S}^j_T > 0 \mid {\cal F}_{t-1}]>0] > 0$ if and only if 
$\p[S^j_{t-1} = 0 \mbox{ and } S^j_T > 0] > 0$, this proves \eqref{penaltyG}.
\qed

\bigskip
\noindent
{\bf Proof of Proposition \ref{prop:short}}\\
It follows as in the proof of Proposition \ref{prop:trans} that 
$$
\phi^*(\p) = \sup_{X \in U_Z} \brak{\EP X - \phi(X)} = \phi^*_G(\p) \quad \mbox{for all } \p \in {\cal P}_Z,
$$
and (i)--(iv) are equivalent. Moreover, 
$$
\phi^*_G(\p) =\sup_{\vartheta \ge 0} \EP\edg{\sum_{t,j \ge 1}\vartheta_t^j \Delta \tilde{S}_t^j} +\sup_{\theta \in \Theta} 
\EP \brak{\sum_{i \in I} \theta_i H_i - h_i},
$$
and since the first supremum can be taken over non-negative predictable strategies $(\vartheta_t)$ 
such that $$\EP \edg{\sum_{t,j \ge 1}\vartheta_t^j \Delta \tilde{S}_t^j} \ge 0,$$ it can equivalently be taken 
over bounded non-negative predictable strategies. Now, it is easy to see that
$$
\phi^*_G(\p) = \left\{\begin{array}{ll}
\sup_{\theta} \EP(\sum_{i \in I} \theta_i H_i - h(\theta)) & \mbox{ if } \tilde{S}^1, \dots, \tilde{S}^J 
\mbox{ are supermartingales under } \p\\
+ \infty & \mbox{ otherwise.}
\end{array} \right.
$$
\qed

\subsection{Proofs of Lemma \ref{lemma:oce} and Propositions \ref{prop:AVaR} and \ref{prop:Ent}}

{\bf Proof of Lemma \ref{lemma:oce}}\\
It follows from (l1)--(l2) that for all $X \in B_Z$, $s \mapsto \EQ l_{\q}(s-X)$ 
is a real-valued increasing convex function on $\mathbb{R}$ such that
$$
\lim_{s \to \pm \infty} \EQ l_{\q}(s-X) - s = + \infty.
$$
In particular, there exists a minimizing $s$, and it is easy to see that $\rho_{\q}$ is a real-valued 
decreasing convex functional on $B_Z$ with the translation property $\rho_{\q}(X+m) = \rho_{\q}(X) - m$, $m \in \mathbb{R}$.
Moreover, $\EQ l_{\q}(cZ) < +\infty$ for all $c \in \mathbb{R}_+$. So
$B_Z$ is contained in the Orlicz heart $M^{l_{\q}}$ corresponding to $\q$ and the Young function 
$l_{\q}(.)-l_{\q}(0)$. By Theorem 4.6 and the computation in Section 5.4 of \cite{CL09},
$$
\rho_{\q}(X) = \max_{\p} \brak{\EP[-X] - \EQ \edg{l^*_{\q}\brak{\frac{d\p}{d\q}}}}, \quad X \in B_Z,
$$
where the maximum is over all $\p \ll \q$ such that $d\p/d\q$ is in the norm-dual of $M^{l_{\q}}$.
For all these $\p$, one has $\EP Z = \EQ[Z d\p/ d\q] < + \infty$, showing that $\p \in {\cal P}_Z$.
On the other hand, since $l_{\q}(x) \ge xy - l^*_{\q}(y)$ for all $x,y \in \mathbb{R}$,
one has for every $X \in B_Z$, $s \in \mathbb{R}$ and $\p \ll \q$, 
$$
\EQ l_{\q}(s-X) - s \ge \EQ \edg{(s-X) \frac{d\p}{d\q}} - \EQ \edg{l^*_{\q}\brak{\frac{d\p}{d\q}}} -s=
\EP[-X] - \EQ \edg{l^*_{\q}\brak{\frac{d\p}{d\q}}}.
$$
This proves the duality \eqref{OrDual}. By Jensen's inequality and (l3), 
$$
\EQ l^*_{\q}(d\p/d\q) \ge l^*_{\q} \EQ [d\p/d\q] = l^*_{\q}(1) = 0,$$ and therefore,
$\min_{\p \in {\cal P}_Z} \EQ l^*_{\q}(d\p/d\q) = \EQ l^*_{\q}(d\q/d\q) = 0$, showing that $\rho_{\q}(0) = 0$.
Moreover, since
$$
\bigcap_{\q\in{\cal Q}} \crl{X\in B_Z: \rho_\q(X)\le 0} = \{X \in B_Z : \sup_{\q \in {\cal Q}} \rho_{\q}(X) \le 0\},
$$
the acceptance set $A$ can be written as
$$
A = \crl{X \in B_Z : \EP X + \alpha(\p) \ge 0 \mbox{ for all } \p \in {\cal P}_Z} + B^+,
$$
where $\alpha(\p) := \inf_{\q \in {\cal Q}} \alpha_{\q}(\p)$ with
$$
\alpha_\q(\p):=\left\{
\begin{array}{ll}
\EQ l^*_\q (d\p/d\q) & \mbox{ if } \p \ll \q\\
+ \infty & \mbox{ otherwise.}
\end{array}
\right.
$$
It follows from $\min_{\p \in {\cal P}_Z} \alpha_{\q}(\p) = 0$ for all $\q \in {\cal Q}$ that
$\min_{\p \in {\cal P}_Z} \alpha(\p) = 0$. Hence (A1) holds.
Moreover, since $l^*_{\q}(y) \ge xy - l_{\q}(x)$ for all $x,y \in \mathbb{R}$, 
one has for every $\p \in {\cal P}_Z$ for which there exists a $\q \in {\cal Q}$ such that $\p \ll \q$,
\beas
\alpha(\p) = \inf_{\q \in {\cal Q}, \, \p \ll \q} \EQ l^*_{\q}\brak{\frac{d\p}{d\q}} \ge
\inf_{\q \in {\cal Q}, \, \p \ll \q} \EQ \edg{\varphi(Z) \frac{d\p}{d\q} - l_{\q} (\varphi(Z))}
\ge \EP \varphi(Z) - \sup_{\q \in {\cal Q}} \EQ l_{\q} (\varphi(Z)),
\eeas
which implies (A2) for $\beta = \varphi - \sup_{\q \in {\cal Q}} \EQ l_{\q} (\varphi(Z))$.
\qed

\bigskip
\noindent
{\bf Proof of Proposition \ref{prop:AVaR}}\\
Since $\AV^{\q}$ is a transformed loss risk measure with loss function $l_{\q}(x) = x^+/\lambda$, it 
follows from the integrability condition \eqref{phicond} that 
the assumptions of Lemma \ref{lemma:oce} are satisfied. So one obtains from 
Proposition \ref{prop:A} that condition \eqref{condcall} holds. Moreover, $G-A$ is a convex cone. Therefore, 
by Proposition \ref{prop:dualcond},
\be \label{GA}
\phi^*(\p) = \sup_{X \in C_Z \cap(G-A)} \EP X = 
\left\{\begin{array}{cl}
0 & \mbox{ if } \EP X \le 0 \mbox{ for all } X \in C_Z \cap (G-A)\\
+ \infty & \mbox{ otherwise.}
\end{array} \right.
\ee
Let us denote $\hat{\cal M} := \crl{\p \in {\cal P}_Z : \phi^*(\p) = 0}$ and write
${\cal M} = {\cal M}_G \cap {\cal M}_A$, where ${\cal M}_G$ is the set of all 
$\p \in {\cal P}_Z$ satisfying conditions a)--b) of Proposition \ref{prop:AVaR}
and ${\cal M}_A$ the set of all $\p \in {\cal P}_Z$ fulfilling condition c). 
Since for $X \in A$, the negative part $X^-$ belongs to $B_Z$, one obtains from \eqref{GA},
\be \label{GA2}
\phi^*(\p) \le \sup_{X \in U_Z \cap(G-A)} \EP X
\le \sup_{X \in G, \, \EP X > - \infty} \EP X - \inf_{X \in A} \EP X \quad \mbox{for all } \p \in {\cal P}_Z.
\ee
It follows as in the proof of Proposition \ref{prop:trans} that the second supremum 
in \eqref{GA2} is zero for all $\p \in {\cal M}_G$, while it can be seen from the dual representation
$$
\AV^{\q}(X) = \sup_{\p \in {\cal P}_Z, \, d\p/d\q \le 1/\lambda} \EP[-X]
$$
that the infimum is zero for all $\p \in {\cal M}_A$. In particular, $ \hat{\cal M}
\supseteq {\cal M} = {\cal M}_G \cap {\cal M}_A$.

On the other hand, it can be shown as in the proof of Proposition \ref{prop:trans}
that $\hat{\cal M} \subseteq {\cal M}_G$. Moreover, 
it follows from our assumptions on ${\cal Q}$ that ${\cal M}_A$ is convex and $\sigma({\cal P}_Z,C_Z)$-closed.
Indeed, for $\p_1, \p_2 \in {\cal M}_A$ and $0 \le \mu \le 1$, there exist  $\q_1, \q_2 \in {\cal Q}$ together with
$Y_1,Y_2 \in B^+$ bounded by $1/\lambda$ such that $\p_1 = Y_1 \cdot \q_1$ and
$\p_2 = Y_2 \cdot \q_2$. Therefore,
$$
(\mu \p_1 + (1-\mu) \p_2)[E] \le \frac{1}{\lambda} (\mu \q_1 + (1-\mu) \q_2)[E]
$$
for all measurable sets $E$. It follows that
$$
\frac{d(\mu \p_1 + (1-\mu) \p_2)}{d(\mu \q_1 + (1-\mu) \q_2)} \le \frac{1}{\lambda},
$$
showing that ${\cal M}_A$ is convex. Furthermore, if $(\p_n)$ is a sequence in ${\cal M}_A$ 
converging to a $\p \in {\cal P}_Z$ in $\sigma({\cal P}_Z,C_Z)$, there exist $\q_n \in {\cal Q}$
and $Y_n \in B^+$ bounded by $1/\lambda$ such that $\p_n = Y_n \cdot \q_n$. Condition \eqref{phicond} 
implies that $\sup_{\q \in {\cal Q}} \EQ Z < +\infty$ and $\sup_{\q \in {\cal Q}} \EQ[Z 1_{\crl{Z > z}}] \to 0$ for $z \to + \infty$. 
So it follows like in the proof of Lemma \ref{lemma:comp} that ${\cal Q}$ is $\sigma({\cal P}_Z,C_Z)$-compact. 
Hence, by passing to a subsequence, one can assume that $\q_n$ converges to a $\q$ in ${\cal Q}$ 
with respect to $\sigma({\cal P}_Z,C_Z)$. Then 
\be \label{ineqC}
\EP X = \lim_n \mathbb{E}^{\p_n} X \le \frac{1}{\lambda} \lim_n \mathbb{E}^{\q_n} X = \frac{1}{\lambda} \EQ X
\quad \mbox{for all } X \in C^+_Z.
\ee
As probability measures on $\mathbb{R}^{(J+1)T}$, $\p$ and $\q$ are regular. Therefore, it follows from 
\eqref{ineqC} that $d\p/d\q \le 1/\lambda$, showing that ${\cal M}_A$ is $\sigma({\cal P}_Z,C_Z)$-closed.
By a separating hyperplane argument, one obtains for every $\hat{\p} \in {\cal P}_Z \setminus {\cal M}_A$,
an $X \in C_Z$ such that $\mathbb{E}^{\hat{\p}} X < \inf_{\p \in {\cal M}_A} \EP X = 0$,
implying $X \in A$ and $\phi^*(\hat{\p}) = \sup_{X \in C_Z \cap (G-A)} \mathbb{E}^{\hat{\mathbb{P}}} X = +\infty$. So 
$\hat{\cal M} \subseteq {\cal M}_A$, and as a consequence $\hat{\cal M} = {\cal M}$. This shows that
$$
\phi^*(\p) = \sup_{X \in U_Z \cap(G-A)} \EP X = 
\left\{\begin{array}{ll}
0 & \mbox{ if } \p \in {\cal M}\\
+\infty & \mbox{ otherwise,}
\end{array} \right.
$$
and it follows from Proposition \ref{prop:dualcond} that \eqref{condusc} holds. As a result,
all conditions (i)--(vi) of Theorem \ref{thm:main} are equivalent, which implies that the 
conditions (i)--(iv) of Proposition \ref{prop:AVaR} are equivalent.
\qed

\bigskip
\noindent
{\bf Proof of Proposition \ref{prop:Ent}}\\
$\Ent^{\q}$ is a transformed loss risk measure corresponding to the
loss function $l_{\q}(x) = \exp(\lambda x -1)/\lambda$. Therefore, it follows from condition \eqref{phicondexp} 
that Lemma \ref{lemma:oce} applies. So we know that condition \eqref{condcall} holds.
As in the proof of Proposition \ref{prop:AVaR}, one has
\be \label{dualentr}
\phi^*(\p) \le \sup_{X \in U_Z \cap (G-A)} \EP X \le \sup_{X \in G, \, \EP X > - \infty} \EP X - \inf_{X \in A} \EP X
\quad \mbox{for all } \p \in {\cal P}_Z,
\ee
and $\sup_{X \in G, \, \EP X > - \infty} \EP X = 0$ for $\p$ in the set ${\cal M}_G$
of all measures in ${\cal P}_Z$ satisfying conditions a)--b). Furthermore, since 
$$
\Ent^{\q}(X) = \sup_{\p \in {\cal P}_Z} (\EP[-X] - \eta_{\q}(\p)) \quad \mbox{for all } X \in B_Z,
$$
where 
$$
\eta_{\q}(\p) = \left\{ \begin{array}{ll}
\EQ \brak{\frac{d\p}{d\q} \log \frac{d\p}{d\q}}/\lambda & \mbox{ if } \p \ll \q\\
+ \infty & \mbox{ otherwise,}
\end{array} \right.
$$
one obtains
$$
\inf_{\q \in {\cal Q}} \eta_{\q}(\p) \ge \sup_{X \in B_Z} \brak{\EP[-X] - \sup_{\q \in {\cal Q}} 
\Ent^{\q}(X)} \ge \sup_{X \in A} \EP[-X]
\ge \phi^*(\p) \quad \mbox{for } \p \in {\cal M}_G.
$$
It follows from the assumptions that there exists a continuous function $\tilde{\varphi} : [1,+\infty) \to \mathbb{R}$ 
such that 
$$
\lim_{x \to + \infty} \frac{\tilde{\varphi}(x)}{x} = + \infty \quad \mbox{and} \quad 
\lim_{x \to + \infty} \frac{\varphi(x)}{\tilde{\varphi}(x)} = + \infty.
$$
Denote $\tilde{Z} = \exp(\tilde{\varphi}(Z))$. Then, it follows from condition
\eqref{phicondexp} that $\sup_{\q \in {\cal Q}} \EQ \tilde{Z} < +\infty$ and \linebreak
$\sup_{\q \in {\cal Q}} \EQ[\tilde{Z} 1_{\crl{\tilde{Z} > z}}] \to 0$ for $z \to + \infty$.
So one obtains as in the proof of Lemma \ref{lemma:comp} that ${\cal Q}$ is 
relatively compact in the topology $\sigma({\cal P}_{\tilde{Z}},C_{\tilde{Z}})$.
But since ${\cal Q}$ was assumed to be $\sigma({\cal P}_Z,C_Z)$-closed, and 
$\sigma({\cal P}_{\tilde{Z}},C_{\tilde{Z}})$ is stronger than $\sigma({\cal P}_Z,C_Z)$, 
${\cal Q}$ is $\sigma({\cal P}_{\tilde{Z}},C_{\tilde{Z}})$-compact. 
Moreover, for all $X \in C_Z$, $\exp(X)$ belongs to $C_{\tilde{Z}}$ and
$$
\Ent^{\q}(X) = \frac{1}{\lambda} \log \EQ \exp(-\lambda X)
$$
is concave as well as $\sigma({\cal P}_{\tilde{Z}},C_{\tilde{Z}})$-continuous in $\q$. Therefore, one obtains 
from a minimax result, such as e.g. the one of Ky Fan \cite{Fan53}, that for all $\p \in {\cal P}_Z$,
\beas
&& \inf_{\q \in {\cal Q}} \eta_{\q}(\p) = \inf_{\q \in {\cal Q}} \sup_{X \in C_Z} \brak{\EP[-X] - \Ent^{\q}(X)}\\
&=& \sup_{X \in C_Z} \brak{\EP[-X] - \sup_{\q \in {\cal Q}} \Ent^{\q}(X)} 
= \sup_{X \in C_Z \cap A} \EP[-X] \le \phi^*(\p).
\eeas
Since $\phi^*(\p) = + \infty$ for $\p \in {\cal P}_Z \setminus {\cal M}_G$, this shows that
$$
\eta(\p) = \left\{ \begin{array}{ll}
\inf_{\q \in {\cal Q}} \eta_{\q}(\p) & \mbox{ if } \p \in {\cal M}_G\\
+ \infty & \mbox{ otherwise} \end{array} \right\} = \phi^*(\p),
$$
which, by \eqref{dualentr}, implies $\phi^*(\p) = \sup_{X \in U_Z \cap (G-A)} \EP X$.
So Proposition \ref{prop:dualcond} gives us that condition \eqref{condusc} holds. 
Now Proposition \ref{prop:Ent} is a consequence of Theorem \ref{thm:main}.
 \qed


\end{document}